\newtheorem{theorem}{Theorem}
\newcommand{\ket}[1]{| #1 \rangle}
\newcommand{\bra}[1]{\langle #1 |}
\newcommand{\tr}{\text{Tr}}
\newcommand{\s}{\text{S}}
\newcommand{\e}{\text{E}}
\newcommand{\Ea}{\text{E}_\text{A}}
\newcolumntype{M}[1]{>{\centering\arraybackslash}m{#1}}
\begin{document}
\title{Resource-Efficient Certification of System–Environment Entanglement Solely from Reduced System Dynamics}
\author{Jhen-Dong Lin}
\email{jhendonglin@gmail.com}
\affiliation{Department of Physics, National Cheng Kung University, 701 Tainan, Taiwan}
\affiliation{Center for Quantum Frontiers of Research \& Technology, NCKU, 70101 Tainan, Taiwan}

\author{Pao-Wen Tu}
\affiliation{Department of Physics, National Cheng Kung University, 701 Tainan, Taiwan}
\affiliation{Center for Quantum Frontiers of Research \& Technology, NCKU, 70101 Tainan, Taiwan}

\author{Kuan-Yi Lee}
\affiliation{Department of Physics, National Cheng Kung University, 701 Tainan, Taiwan}
\affiliation{Center for Quantum Frontiers of Research \& Technology, NCKU, 70101 Tainan, Taiwan}

\author{Neill Lambert}
\affiliation{RIKEN Center for Quantum Computing, RIKEN, Wakoshi, Saitama 351-0198, Japan}

\author{Adam Miranowicz}
\affiliation{RIKEN Center for Quantum Computing, RIKEN, Wakoshi, Saitama 351-0198, Japan}
\affiliation{Institute of Spintronics and Quantum Information, Faculty of Physics and Astronomy, Adam Mickiewicz University, 61-614 Pozna\'n, Poland}

\author{Franco Nori}
\affiliation{RIKEN Center for Quantum Computing, RIKEN, Wakoshi, Saitama 351-0198, Japan}
\affiliation{Physics Department, The University of Michigan, Ann Arbor, Michigan 48109-1040, USA.}

\author{Yueh-Nan Chen}
\email{yuehnan@mail.ncku.edu.tw}
\affiliation{Department of Physics, National Cheng Kung University, 701 Tainan, Taiwan}
\affiliation{Center for Quantum Frontiers of Research \& Technology, NCKU, 70101 Tainan, Taiwan}
\affiliation{Physics Division, National Center for Theoretical Sciences, Taipei 106319, Taiwan}

\date{\today}

\begin{abstract}
Certifying nonclassical correlations typically requires access to all subsystems, presenting a major challenge in open quantum systems coupled to inaccessible environments. Recent works have shown that, in autonomous pure dephasing scenarios, quantum discord with the environment can be certified from system-only dynamics via the Hamiltonian ensemble formulation. However, this approach leaves open whether stronger correlations, such as entanglement, can be certified. Moreover, its reliance on Fourier analysis requires full-time dynamics, which is experimentally resource-intensive and provides limited information about when such correlations are established during evolution. In this work, we present a method that enables the certification of system-environment quantum entanglement solely from the reduced dynamics of the system. The method is based on the theory of mixed-unitary channels and applies to general non-autonomous pure dephasing scenarios. Crucially, it relaxes the need for full-time dynamics, offering a resource-efficient approach that also reveals the precise timing of entanglement generation. We experimentally validate this method on a Quantinuum trapped-ion quantum processor with a controlled-dephasing model. Finally, we highlight its potential as a tool for certifying gravitationally induced entanglement.
\end{abstract}

\maketitle

\section{Introduction}

Quantum entanglement is a distinct form of quantum correlation with no classical counterpart~\cite{RevModPhys.81.865}. This nonclassical resource plays a central role in quantum computation~\cite{nielsen2010quantum}, communication~\cite{gisin2007quantum}, metrology~\cite{PhysRevLett.96.010401, RevModPhys.89.035002}, cryptography~\cite{
RevModPhys.74.145, pirandola2020advances}, thermodynamics~\cite{RevModPhys.81.1, parrondo2015thermodynamics,kurizki2022thermodynamics}, and in understanding critical phenomena in complex many-body systems~\cite{ RevModPhys.80.517, RevModPhys.91.021001}.

However, certifying entanglement remains a long-standing and generally challenging task. Full access to all subsystems is typically required, which becomes infeasible, especially in open quantum systems where the system of interest interacts with inaccessible environments~\cite{isar1994open, breuer2002theory, rivas2012open, lidar2019lecture, minganti2024open}. Building upon the framework of open quantum systems, several approaches have been developed to partially infer (nonclassical) system–environment correlations, either through the knowledge of system–environment interactions~\cite{shi2009electron,zhu2012explicit,PhysRevA.84.062121,iles2014environmental,PhysRevB.107.085428,PhysRevResearch.6.033237,PhysRevB.109.115408,PhysRevA.112.012211}, or via local detection schemes~\cite{PhysRevLett.107.180402,smirne2011experimental, mazzola2012dynamical, smirne2013interaction, gessner2014local,
tang2015experimental}, which exploit quantum non-Markovian memory effects~\cite{rivas2014quantum,breuer2016colloquium,de2017dynamics}.

One particularly promising class of open-system scenarios is pure dephasing~\cite{RevModPhys.75.715, schlosshauer2019quantum}, where the system loses quantum coherence while its level populations remain approximately unchanged. Pure dephasing arises in many physical settings, such as impurity-phonon interactions in solid-state systems~\cite{PhysRevB.65.195313}, circuit and cavity quantum electrodynamics in the dispersive regime~\cite{walther2006cavity, RevModPhys.93.025005, gu2017microwave, kockum2019quantum, kjaergaard2020superconducting}, and even in certain theoretical models of quantum gravitational interactions between massive particles~\cite{ PhysRevLett.119.240401, PRXQuantum.2.030330, PhysRevLett.128.110401, PRXQuantum.5.010318, RevModPhys.97.015006, RevModPhys.97.015003}. 

With negligible population changes, pure dephasing dynamics offer a simplified yet non-trivial arena for studying system-environment correlations. A notable recent advance builds upon the Hamiltonian ensemble (HE) representation~\cite{PhysRevLett.120.030403, chen2019quantifying, chen2024unveiling}, where the non-unitary system dynamics is described as an ensemble average over autonomous Hamiltonian evolutions. Importantly, not all types of pure dephasing dynamics are compatible with this representation. Such HE incompatibility, under the assumption of autonomous evolution, reveals the presence of system-environment quantum discord~\cite{PhysRevLett.88.017901, henderson2001classical, RevModPhys.84.1655, bera2017quantum}, a weaker form of nonclassical correlation than quantum entanglement.

Nevertheless, the HE incompatibility approach comes with certain challenges. First, as aforementioned, the inference of quantum discord depends on the assumption of autonomous global evolution; that is, evolution under a time-independent Hamiltonian. Second, detecting HE incompatibility requires Fourier analysis of the reduced dynamics over the entire time domain, which is experimentally demanding and provides no information about the specific time at which quantum discord is established. Finally,  whether this method can be extended to witness stronger correlations, such as quantum entanglement, remains unexplored.

In this work, we introduce a generalized entanglement detection scheme that overcomes these limitations. To this end, we employ the concept of mixed-unitary (MU) channels~\cite{audenaert2008random, PhysRevA.80.042108, mendl2009unital, PhysRevA.92.052117, PhysRevA.93.032132, lee2020detecting}. These channels can be expressed as probabilistic mixtures of unitary evolutions and naturally generalize the HE representation beyond strictly autonomous dynamics. Using recently proposed system–environment entanglement criteria for pure dephasing~\cite{PhysRevA.92.032310,PhysRevA.98.052344,PhysRevResearch.2.043062,zhan2021experimental,PhysRevA.104.042411,takahashi2025coherence}, we show that \textit{non-mixed unitarity (NMU) of the reduced dynamics serves as an experimentally efficient witness of system–environment entanglement.} This approach retains the operational spirit of the HE incompatibility while significantly broadening its scope:

\begin{enumerate}
\item The non-mixed unitarity method applies to non-autonomous dynamics, relaxing the requirement of strictly autonomous evolution.
\item It enables certification of system–environment entanglement at arbitrary times, reducing the need for full-time dynamics and lowering the experimental demands.
\item The method remains effective even when the dynamics exhibit pure dephasing behavior only over a short-time regime, unlike the HE model, which requires strict pure dephasing throughout the entire evolution.
\end{enumerate}

We validate this approach by implementing controlled-dephasing dynamics on a trapped-ion quantum processor, System Model H1 by Quantinuum~\cite{quantinuum}. The results demonstrate that certifying HE incompatibility requires sampling the system’s dynamics at multiple time points. In contrast, system-environment entanglement can be certified from a single time point whenever the corresponding NMU is non-trivial, demonstrating its practical advantage in experiments.

Furthermore, we propose that the NMU provides a promising approach for certifying gravitationally induced entanglement in scenarios where massive particles interact with each other via a Newtonian potential. In contrast to the coherence-revival detection scheme introduced in Ref.~\cite{PRXQuantum.2.030330}, the NMU method does not rely on the assumption of time-translational invariance in the system’s dynamics. Importantly, the NMU method enables an unambiguous certification of entanglement generated by gravity, rather than merely excluding specific classes of semiclassical models, as discussed in the corresponding Erratum to Ref.~\cite{PRXQuantum.2.030330}.

The remainder of this paper is organized as follows: In Sec.~\ref{sec2}, we introduce the theoretical framework and show that NMU serves as a witness of system-environment entanglement and compare it with HE incompatibility. In Sec.~\ref{sec3}, we detail the experimental realization on the Quantinuum trapped-ion processor. In Sec.~\ref{sec4}, we discuss the application of the NMU approach to certifying gravitationally induced entanglement. Finally, in Sec.~\ref{sec5}, we summarize this work and discuss future directions.

\section{Certify system-environment entanglement of pure dephasing \label{sec2}}

In this work, we consider pure dephasing of a generic open system (S) with respect to a reference basis $\{|i\rangle\}$. The dephasing arises from the interaction between the system and its environment (E), governed by a global system-environment unitary evolution of the form:
\begin{equation}
    U_{\s\e} = \sum_i \ket{i}\bra{i} \otimes V_i, \label{eq: U_SE}
\end{equation}
where $\{V_i\}$ are unitary operators acting on E. Given an initially uncorrelated state $\rho_\s\otimes \rho_\e$, the joint system-environment state after evolution becomes 
\begin{equation}
    \begin{aligned}
        \rho_{\s\e} = U_{\s\e}~ \rho_\s \otimes \rho_\e~ U_{\s\e}^\dagger = \sum_{i,j} \rho_{\s,i,j} \ket{i}\bra{j} \otimes V_i \rho_\e V_j^\dagger
    \end{aligned}
\end{equation}
with $\rho_{\s,i,j} = \bra{i}\rho_\s \ket{j}$. The evolved system's reduced state can then be written as 
\begin{equation}
    \rho'_\s  = \tr_\e(\rho_{\s\e})= \sum_{i,j} \rho_{\s,i,j}~ \phi_{i,j} \ket{i}\bra{j},
\end{equation}
with the dephasing factor $\phi_{i,j} = \tr_\e (V_i \rho_\e V_j^\dagger) $. Since $\phi_{i,i} = 1$ for all $i$, the populations remain unchanged, reflecting the pure dephasing nature of the system's evolution.

If the system is initially prepared with vanishing coherence (i.e. $\rho_{\s,i,j} = 0~\forall~ i\neq j$), the evolved joint state $\rho_{\s\e}$ remains separable and takes the form $\rho_{\s\e} = \sum_i \rho_{\s,i,i} \ket{i}\bra{i} \otimes V_i \rho_\e V_i^\dagger$. This indicates that the presence of initial coherence in S is a necessary condition for establishing system-environment entanglement~\cite{takahashi2025coherence}. However, it is not sufficient: the structure of the global unitary $U_{\s\e}$ and the initial environmental state $\rho_\e$ also plays a crucial role in the entanglement generation. For instance, consider the case where $V_i$ and $\rho_\e$ can be simultaneously diagonalized in some environmental basis $\{\ket{e_n}\}$, such that $V_i = \sum_n \exp(i\phi_{i,n})\ket{e_n}\bra{e_n}$ and $\rho_\e = \sum_n p_n|e_n\rangle \langle e_n|$. Under this condition, the evolved total state becomes
\begin{equation}
    \begin{aligned}
        &\tilde\rho_{\s\e} = \sum_n p_n~\rho_{\s,n} \otimes   \ket{e_n}\bra{e_n}, \\
        \text{with } &\rho_{\s,n} =  \sum_{i,j} ~ \rho_{\s,i,j}~ e^{i(\phi_{i,n}-\phi_{j,n})}~ \ket{i}\bra{j}.
    \end{aligned}
\end{equation}
Hence, even in the presence of initial coherence, $\tilde \rho_{\s\e}$ remains separable. More precisely, it is also zero discordant from E to S. Moreover, the reduced dynamics of the system in this case correspond to a MU channel:
\begin{equation}
    \tilde\rho_\s = \tr_\e \tilde\rho_{\s\e} = \sum_n p_n \tilde{U}_n \rho_\s \tilde{U}_n^\dagger,
\end{equation}
where each $\tilde{U}_n = \sum_i \exp(i\phi_{i,n})\ket{i}\bra{i}$ is a unitary operator acting on S. 

In this particular case, we observe that the separability of the evolved joint state leads to a MU channel for the reduced dynamics. Although this observation arises from a specific scenario, it raises a broader question: can deviations from MU dynamics serve as signatures of system-environment entanglement? As the main result of this work, we formalize this possibility through the following theorems.

\begin{theorem}
The reduced evolution of a system is a mixed-unitary channel if the evolved system-environment state $\rho_{\s\e}$ remains zero-discordant  from E to S.\label{thm1}
\end{theorem}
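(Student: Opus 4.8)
The plan is to turn the zero-discord hypothesis into a rigid structural constraint on the operators $V_i \rho_\e V_j^\dagger$ and then show that this constraint forces the decoherence data to factorize into pure phases. First I would invoke the standard characterization that $\rho_{\s\e}$ has vanishing discord from E to S exactly when it is block-diagonal in some orthonormal environment basis $\{\ket{e_n}\}$, i.e. $\rho_{\s\e}=\sum_n p_n\, \rho_{\s,n}\otimes\ket{e_n}\bra{e_n}$. Choosing a fully coherent probe (so that every $\rho_{\s,i,j}\neq 0$, which is precisely what licenses the all-pairs conditions below) and projecting the explicit form $\rho_{\s\e}=\sum_{i,j}\rho_{\s,i,j}\ket{i}\bra{j}\otimes V_i\rho_\e V_j^\dagger$ onto the off-diagonal blocks $\bra{e_n}\cdot\ket{e_{n'}}$ with $n\neq n'$, linear independence of the $\ket{i}\bra{j}$ then yields the key conditions $\bra{e_n}V_i\rho_\e V_j^\dagger\ket{e_{n'}}=0$ for all $i,j$ and all $n\neq n'$; equivalently, every operator $V_i\rho_\e V_j^\dagger$ is diagonal in the basis $\{\ket{e_n}\}$.

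Next I would repackage these conditions geometrically. Writing $\ket{w_i^{(n)}}=\rho_\e^{1/2}V_i^\dagger\ket{e_n}$, the surviving diagonal blocks are Gram overlaps $c_{i,j}^{(n)}:=\bra{e_n}V_i\rho_\e V_j^\dagger\ket{e_n}=\braket{w_i^{(n)}|w_j^{(n)}}$, while the vanishing off-diagonal blocks say precisely that $\braket{w_i^{(n)}|w_j^{(n')}}=0$ whenever $n\neq n'$, so the subspaces $W_n=\mathrm{span}_i\{\ket{w_i^{(n)}}\}$ are mutually orthogonal. Two normalization identities then drive the argument: for each fixed $i$ one has $\sum_n\ket{w_i^{(n)}}\bra{w_i^{(n)}}=\rho_\e^{1/2}V_i^\dagger V_i\rho_\e^{1/2}=\rho_\e$, and the same-$i$ overlaps $\braket{w_i^{(n)}|w_i^{(n')}}$ vanish for $n\neq n'$, so for each $i$ the nonzero $\{\ket{w_i^{(n)}}\}_n$ form an orthogonal eigenbasis of $\rho_\e$ with eigenvalues $\|w_i^{(n)}\|^2=c_{i,i}^{(n)}$.

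The heart of the proof, and the step I expect to be the main obstacle, is upgrading mere mutual orthogonality of the $W_n$ to the much stronger statement that, for each $n$, all the vectors $\{\ket{w_i^{(n)}}\}_i$ are parallel. I would argue by dimension counting: the orthogonal decomposition forces $\sum_n\dim\!\left(W_n\cap\mathrm{range}(\rho_\e)\right)\le \mathrm{rank}(\rho_\e)$, yet each fixed $i$ already contributes one nonzero vector $\ket{w_i^{(n)}}$ to exactly $\mathrm{rank}(\rho_\e)$ of these intersections; hence every nontrivial $W_n\cap\mathrm{range}(\rho_\e)$ is one-dimensional and all $\ket{w_i^{(n)}}$ lie along a common unit vector $\ket{\hat v_n}$. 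Writing $\ket{w_i^{(n)}}=\lambda_{i,n}\ket{\hat v_n}$ gives the rank-one factorization $c_{i,j}^{(n)}=\overline{\lambda_{i,n}}\,\lambda_{j,n}$, and since $\ket{\hat v_n}$ is an eigenvector of $\rho_\e$ the modulus $|\lambda_{i,n}|^2=p_n$ is independent of $i$, so $\lambda_{i,n}=\sqrt{p_n}\,u_{i,n}$ with $|u_{i,n}|=1$. Substituting $c_{i,j}^{(n)}=p_n\,\overline{u_{i,n}}u_{j,n}$ into $\rho'_\s=\tr_\e\rho_{\s\e}=\sum_n\sum_{i,j}\rho_{\s,i,j}\,c_{i,j}^{(n)}\ket{i}\bra{j}$ collapses each block to $p_n U_n\rho_\s U_n^\dagger$ with the diagonal unitary $U_n=\sum_i\overline{u_{i,n}}\ket{i}\bra{i}$, yielding the mixed-unitary form $\rho'_\s=\sum_n p_n U_n\rho_\s U_n^\dagger$. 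The only remaining care is the possible degeneracy of $\rho_\e$, which the dimension-counting argument handles uniformly without appealing to uniqueness of eigenvectors.
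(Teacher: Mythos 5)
Your proof is correct, but it takes a genuinely different route from the paper's. The paper starts from the algebraic characterization of zero discord as the commutator conditions $[V_i\rho_\e V_j^\dagger, V_k\rho_\e V_l^\dagger]=0$ (imported from Ref.~\cite{huang2011new}), specializes indices to obtain $W_{ij}\rho_\e^2 W_{ij}^\dagger=\rho_\e^2$ with $W_{ij}=V_j^\dagger V_i$, concludes that each $W_{ij}$ is diagonal in the eigenbasis of $\rho_\e$, and splits the phases via the cocycle identity $W_{ij}W_{ik}^\dagger=W_{kj}$, deferring the degenerate and rank-deficient cases to Appendix~A. You instead work directly from the block-diagonal (classical-on-E) form of a zero-discord state, encode everything in the vectors $\ket{w_i^{(n)}}=\rho_\e^{1/2}V_i^\dagger\ket{e_n}$, and force each diagonal block of the Gram matrix to be rank one --- hence a pure-phase matrix --- by counting dimensions of the mutually orthogonal subspaces $W_n$. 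Your version is self-contained (no external lemma), handles degeneracy and rank deficiency of $\rho_\e$ in a single pass, and makes explicit the role of the fully coherent probe (all $\rho_{\s,i,j}\neq 0$) in licensing the all-pairs conditions, an assumption the paper leaves implicit in its citation; what the paper's route buys is brevity and a form of the zero-discord condition that plugs directly into the separability criteria used in Theorem~2. One caveat: your dimension count literally requires $\mathrm{rank}(\rho_\e)<\infty$, since for infinite rank the inequality ``at least $r$ nonzero subspaces of total dimension at most $r$'' no longer forces one-dimensionality. The fix is immediate --- for fixed $n$ and any $i,j$, the vector $\ket{w_j^{(n)}}$ is orthogonal to the complete orthogonal family $\{\ket{w_i^{(n')}}\}_{n'\neq n}$, whose closed span is the orthocomplement of $\ket{w_i^{(n)}}$ inside $\overline{\mathrm{range}(\rho_\e)}$, so $\ket{w_j^{(n)}}\propto\ket{w_i^{(n)}}$ --- and the paper's own treatment is equally informal on this point.
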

\begin{proof}
The proof begins by assuming that $\rho_{\s\e}$ is zero-discordant from E to S. This implies that there exist a rank-one projective measurement $\{M_n\}$ on E such that $\sum_n\mathbb{1}\otimes M_n~ \rho_{\s\e}~ \mathbb{1}\otimes M_n =\rho_{\s\e}$. As shown in Ref.~\cite{huang2011new}, this condition can be expressed as 
\begin{equation}
    [V_i\rho_\e V_j^\dagger, V_k \rho_\e V_l^\dagger] = 0~~~~\forall~i,j,k,l. \label{eq: zero discord criteria}
\end{equation}
By choosing $k=j\text{ and } l=i$, we obtain 
\begin{equation}
    \begin{aligned}
        &V_i~\rho_\e^2~V_i^\dagger = V_j~\rho_\e^2~V_j^\dagger, \\
        \text{or equivalently, } &W_{ij}~\rho_\e^2~W_{ij}^\dagger = \rho_\e^2, \label{eq: zero discord criteria 2}
    \end{aligned}
\end{equation}
where we define the unitary operator $W_{ij} = V_j^\dagger V_i$.
Let us now consider the spectral resolution: $\rho_\e = \sum_n p_n \ket{e_n}\bra{e_n}$. For simplicity, we assume here that $\rho_\e$ is full-rank and non-degenerate, which implies that $p_n>0~\forall n$, $p_n\neq p_m~\forall n\neq m$, and $\{\ket{e_n}\}$ forms a basis of E. Under this assumption, we conclude that
\begin{equation}
      W_{ij} \ket{e_n}\bra{e_n} W_{ij}^\dagger = \ket{e_n}\bra{e_n} ~~~~\forall n.
\end{equation} 
Accordingly, the unitary operator $W_{ij}$ is diagonal in the basis $\ket{e_n}$ and can be expressed as 
\begin{equation}
    W_{ij} = \sum_n\exp(i\theta_{i,j,n})\ket{e_n}\bra{e_n}.
\end{equation}
Furthermore, one can observe that $W_{ij}W_{ik}^\dagger = W_{kj}$, which implies that $\theta_{i,j,n}-\theta_{i,k,n} = \theta_{k,j,n}$. In other words, the phase $\theta_{i,j,n}$ can be written as a difference of two other phases $\phi_{i,n}$ and $\phi_{j,n}$, i.e., $\theta_{i,j,n} = \phi_{i,n}-\phi_{j,n}$. This allows to conclude that the reduced evolution of S corresponds to a MU channel:
\begin{equation}
    \begin{aligned}
        \tilde{\rho}_\s = \sum_{i,j} \rho_{\s,i,j} \tr(W_{ij}\rho_\e) \ket{i}\bra{j} = \sum_n p_n \tilde{U}_n \rho_\s \tilde{U}_n^\dagger,
    \end{aligned}
\end{equation}
with $\tilde{U}_n = \sum_i \exp(i\phi_{i,n})\ket{i}\bra{i}$.
Note that this derivation can be extended beyond the assumption that $\rho_\e$ is full-rank and non-degenerate. We leave this formal proof in Appendix~\ref{Appendix A}.  
\end{proof}
In general, it is known that if a bipartite state is zero discordant, it must also be separable; however, the converse does not necessarily hold. Nevertheless, as we demonstrate in Theorem~\ref{thm2} below, when the system is prepared in a pure state, the separability of the total state $\rho_{\s\e}$ also implies that it is zero discordant from E to S. 

\begin{theorem}
The criteria for $\rho_{\s\e}$ to remain zero discordant and separable are equivalent. \label{thm2}
\end{theorem}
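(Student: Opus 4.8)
The plan is to prove the two implications separately. One direction is immediate and holds in full generality: a state that is zero discordant from E to S has the classical--quantum form $\sum_n p_n \sigma_{\s}^{(n)} \otimes \ket{e_n}\bra{e_n}$ and is therefore manifestly separable. The entire content of the theorem thus lies in the converse, which I would establish under the pure-input assumption $\rho_\s = \ket{\psi}\bra{\psi}$ with $\ket{\psi} = \sum_i c_i \ket{i}$: namely, that separability of $\rho_{\s\e} = U_{\s\e}(\ket{\psi}\bra{\psi}\otimes\rho_\e)U_{\s\e}^\dagger$ forces the zero-discord condition of Eq.~(\ref{eq: zero discord criteria}). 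As in Theorem~\ref{thm1}, I would first treat $\rho_\e$ as full-rank, so that the range of $\rho_{\s\e}$ is exactly the subspace $U_{\s\e}(\ket{\psi}\otimes\mathcal{H}_\e)$, and relegate the rank-deficient case to the appendix.

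The key lemma I would prove is a characterization of the product vectors contained in this range. A short linear-algebra computation shows that
\[
U_{\s\e}\,\ket{\psi}\ket{b} = \sum_i c_i \ket{i}\otimes V_i\ket{b} \ \text{is a product state} \iff \text{all } V_i\ket{b}\ (c_i\neq 0)\ \text{are parallel},
\]
and the latter condition is in turn equivalent to $\ket{b}$ being a simultaneous eigenvector of every $W_{ij} = V_j^\dagger V_i$ (with $i,j$ in the support of $\ket{\psi}$). Thus the product vectors in $\mathrm{range}(\rho_{\s\e})$ are precisely the images $U_{\s\e}\ket{\psi}\ket{b}$ of the common eigenvectors $\ket{b}$ of the family $\{W_{ij}\}$.

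Next I would invoke the standard fact that any separable state admits a decomposition into pure product states all drawn from its own range (each projector in a separable ensemble is dominated by $\rho_{\s\e}$, so its support lies in $\mathrm{range}(\rho_{\s\e})$). Writing such a decomposition as $\rho_{\s\e} = \sum_\mu q_\mu\, U_{\s\e}(\ket{\psi}\bra{\psi}\otimes\ket{b_\mu}\bra{b_\mu})U_{\s\e}^\dagger$, the lemma forces each $\ket{b_\mu}$ to lie in a single joint eigenspace of the commuting family $\{W_{ij}\}$. Because $U_{\s\e}(\ket{\psi}\bra{\psi}\otimes\,\cdot\,)U_{\s\e}^\dagger$ is injective on operators on $\mathcal{H}_\e$, comparison with $\rho_{\s\e}=U_{\s\e}(\ket{\psi}\bra{\psi}\otimes\rho_\e)U_{\s\e}^\dagger$ identifies $\rho_\e = \sum_\mu q_\mu \ket{b_\mu}\bra{b_\mu}$. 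Since every $\ket{b_\mu}$ sits in one joint eigenspace of the $W_{ij}$, this $\rho_\e$ is block-diagonal with respect to their common eigenbasis, i.e. $[W_{ij},\rho_\e]=0$ for all $i,j$ --- which, as shown in the derivation of Theorem~\ref{thm1}, is exactly the zero-discord condition of Eq.~(\ref{eq: zero discord criteria}). This closes the converse and hence the equivalence.

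The main obstacle I anticipate is the careful justification of the middle step: guaranteeing that the separable decomposition can be taken entirely within $\mathrm{range}(\rho_{\s\e})$ and that its product vectors pull back, via $U_{\s\e}^\dagger$, to kets of the controlled form $\ket{\psi}\ket{b_\mu}$ with $\ket{b_\mu}$ a common $W_{ij}$-eigenvector. Degeneracies require attention: when several $W_{ij}$-eigenvalues coincide the eigenspaces are higher-dimensional, so the $\ket{b_\mu}$ are no longer unique, but the block-diagonal conclusion --- and hence $[W_{ij},\rho_\e]=0$ --- still survives because each term individually lives in a single block. Handling a non-full-support $\ket{\psi}$ and a rank-deficient or degenerate $\rho_\e$ (so that the range shrinks accordingly) is the remaining bookkeeping, which I would align with the full-rank treatment of Theorem~\ref{thm1} and complete in the appendix.
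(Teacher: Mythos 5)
Your proof is correct, but it takes a genuinely different route from the paper's. The paper does not argue from first principles: it imports the necessary-and-sufficient separability criteria of Refs.~\cite{PhysRevA.92.032310, PhysRevA.98.052344} --- the qubit-like condition~\eqref{eq: qubit-like condition} and the qutrit-like condition~\eqref{eq: qutrit-like condition}, the latter of which it must first repair via a PPT/principal-minor computation (Appendix~\ref{Appendix B}) --- and then shows by direct algebra that this pair of conditions is equivalent to the zero-discord condition~\eqref{eq: zero discord criteria} (main text and Appendix~\ref{Appendix C}). You instead give a self-contained argument via the range criterion: for a full-support pure input, the product vectors in $\mathrm{range}(\rho_{\s\e})=U_{\s\e}(\ket{\psi}\otimes\mathcal{H}_\e)$ are exactly the images of common eigenvectors of the $W_{ij}=V_j^\dagger V_i$, so any separable decomposition pulls back to an ensemble $\rho_\e=\sum_\mu q_\mu\ket{b_\mu}\bra{b_\mu}$ of joint eigenvectors, which yields zero discord. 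Each step you list checks out (product vectors of a separable ensemble do lie in the range; the pullback through $U_{\s\e}(\ket{\psi}\bra{\psi}\otimes\cdot)U_{\s\e}^\dagger$ is injective). What the paper's route buys is the explicit algebraic form of the criteria, which is what the rest of the paper uses operationally and what connects NMU to negativity; what your route buys is a cleaner conceptual proof that sidesteps the erroneous qutrit-like condition entirely and makes transparent \emph{why} separability collapses to zero discord for controlled-unitary states.

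One imprecision to fix in the last step: $[W_{ij},\rho_\e]=0$ for all $i,j$ is \emph{not} by itself equivalent to the zero-discord condition~\eqref{eq: zero discord criteria} --- take $\rho_\e$ maximally mixed with non-commuting $W_{ij}$ as a counterexample. What your construction actually delivers is stronger: the support of $\rho_\e$ is spanned by \emph{simultaneous} eigenvectors of the whole family $\{W_{ij}\}$, so the $W_{ij}$ mutually commute there and $\rho_\e$ is block-diagonal in their joint eigenbasis; writing $V_i\rho_\e V_j^\dagger=V_0\,W_{i0}\rho_\e W_{j0}^\dagger\,V_0^\dagger$ then shows all these operators are simultaneously diagonalizable, which is~\eqref{eq: zero discord criteria}. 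State it that way and the argument is complete.
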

\begin{proof}
According to Refs.~\cite{PhysRevA.92.032310, PhysRevA.98.052344}, the separability criteria for the pure dephasing model, with initial pure system state, can be expressed through the following two independent classes of conditions:
\begin{equation}
    V_i \rho_\e V_i^\dagger = V_j \rho_\e V_j^\dagger=R~~~~\forall i,j, \label{eq: qubit-like condition}
\end{equation}
and
\begin{equation}
    [V_iV_j^\dagger,V_k V_l^\dagger]=0~~~~\forall i,j,k,l. \label{eq: qutrit-like condition}
\end{equation}
These two classes of conditions are referred to qubit-like and qutrit-like conditions, since they can be obtained by treating the system as a qubit and a qutrit, respectively.
Combining these two conditions, we show that
\begin{equation}
    \begin{aligned}
        &R[V_iV_j^\dagger,V_k V_l^\dagger]R= [V_i\rho_\e V_j^\dagger,V_k \rho_\e V_l^\dagger]=0.
    \end{aligned}
\end{equation}
This leads to the conclusion that the separability also implies zero discord. 

We remark that the qutrit-like conditions described by Eq.~\eqref{eq: qutrit-like condition}, which was originally derived in Ref.~\cite{PhysRevA.98.052344}, are overly restrictive. The correct form is derived in Appendix~\ref{Appendix B}. The formal proof of Theorem~\ref{thm2}, incorporating this correction, is provided in Appendix~\ref{Appendix C}. 
\end{proof}

We can now formalize the main result of this work: \textit{Non-mixed unitarity for pure dephasing dynamics can serve as a witness for system-environment entanglement.} More precisely, if the reduced evolution of the system is not a MU channel, then there exist initial system states for which the evolved system-environment joint state becomes entangled. 

According to Refs.~\cite{PhysRevA.92.032310, PhysRevA.98.052344}, a violation of the separability criterion implies system-environment entanglement with nonzero negativity~\cite{PhysRevLett.77.1413, HORODECKI19961, PhysRevA.65.032314}. Based on this insight, we conjecture that NMU may be quantitatively related to the negativity. In Appendix~\ref{Appendix D}, we present numerical simulations supporting this conjecture, showing that NMU provides a lower bound on the negativity.

\subsection{Comparison with Hamiltonain ensemble formalism}
The proposed method based on NMU serves as a natural extension of the HE incompatibility originally introduced in Ref.~\cite{PhysRevLett.120.030403}. We briefly review the central concept here for comparison. A HE is defined as $\{(p_\lambda,H_\lambda)\}_\lambda$, which consists of a collection of time-independent Hamiltonian $H_\lambda$ acting on the system, each associated with a time-independent probability distribution $p_\lambda$. The corresponding ensemble-averaged system dynamics is described by the dynamical map $\{\mathcal{E}_t\}_t$:
\begin{equation}
    \rho_\s(t) = \mathcal{E}_t(\rho_\s) =  \int d\lambda~p_\lambda ~e^{-iH_\lambda t} ~\rho_\s~ e^{iH_\lambda t}. \label{eq: ensemble average dynamics}
\end{equation}
Operationally, a unique (quasi-)distribution $\tilde{p}_\lambda$ for a given pure dephasing dynamics can be obtained by performing a generalized inverse Fourier transform, as outlined in Ref.~\cite{PhysRevLett.120.030403}. Further, a given pure dephasing dynamics is incompatible with HE model, if the constructed quasi-distribution $\tilde{p}_\lambda$ exhibit negative values. Notably, under the assumption of autonomous pure dephasing, where the system-environment global evolution is governed by a time-independent Hamiltonian, \textit{the HE incompatibility also implies the generation of system-environment quantum discord}. 

Compared to the HE incompatibility, the proposed NMU approach provides several advantages for entanglement certification. First, according to Theorems~\ref{thm1} and \ref{thm2}, the NMU approach is not restricted to autonomous system-environment models. They are also applicable to more general non-autonomous pure dephasing settings, such as Floquet~\cite{mori2023floquet} and collisional models~\cite{ciccarello2022quantum}. 

Second, constructing the quasi-probability under the HE framework requires access to the full-time dynamics, which can result in substantial experimental overhead. Moreover, while the negativity of the quasi-probability indicates the presence of system-environment quantum discord, it offers no information about the specific time at which this nonclassical correlation is established. In contrast, the NMU framework enables certification of system-environment entanglement at any given time, thereby relaxing the stringent requirement of full-time dynamics imposed by the HE approach and significantly reducing the experimental demands. 

Third, the NMU also applies to the cases where the dynamics can be well approximated by pure dephasing only in the short-time regime. Specifically, for such a scenario, the short-time system-environment evolution can be written as $U(\tau) \approx \sum_i \ket{i}\bra{i}\otimes V_i(\tau) + \delta U(\tau)$, where $\delta U(\tau)$ captures the small deviation from pure dephasing with $\|\delta U(\tau)\|\ll 1$. In this case, entanglement generated at time $\tau$ is predominantly due to the pure dephasing term, which can be effectively certified using the NMU approach.

\section{Experiment on cloud quantum computer \label{sec3}}
\begin{figure}
    \centering
    \includegraphics[width=1\linewidth]{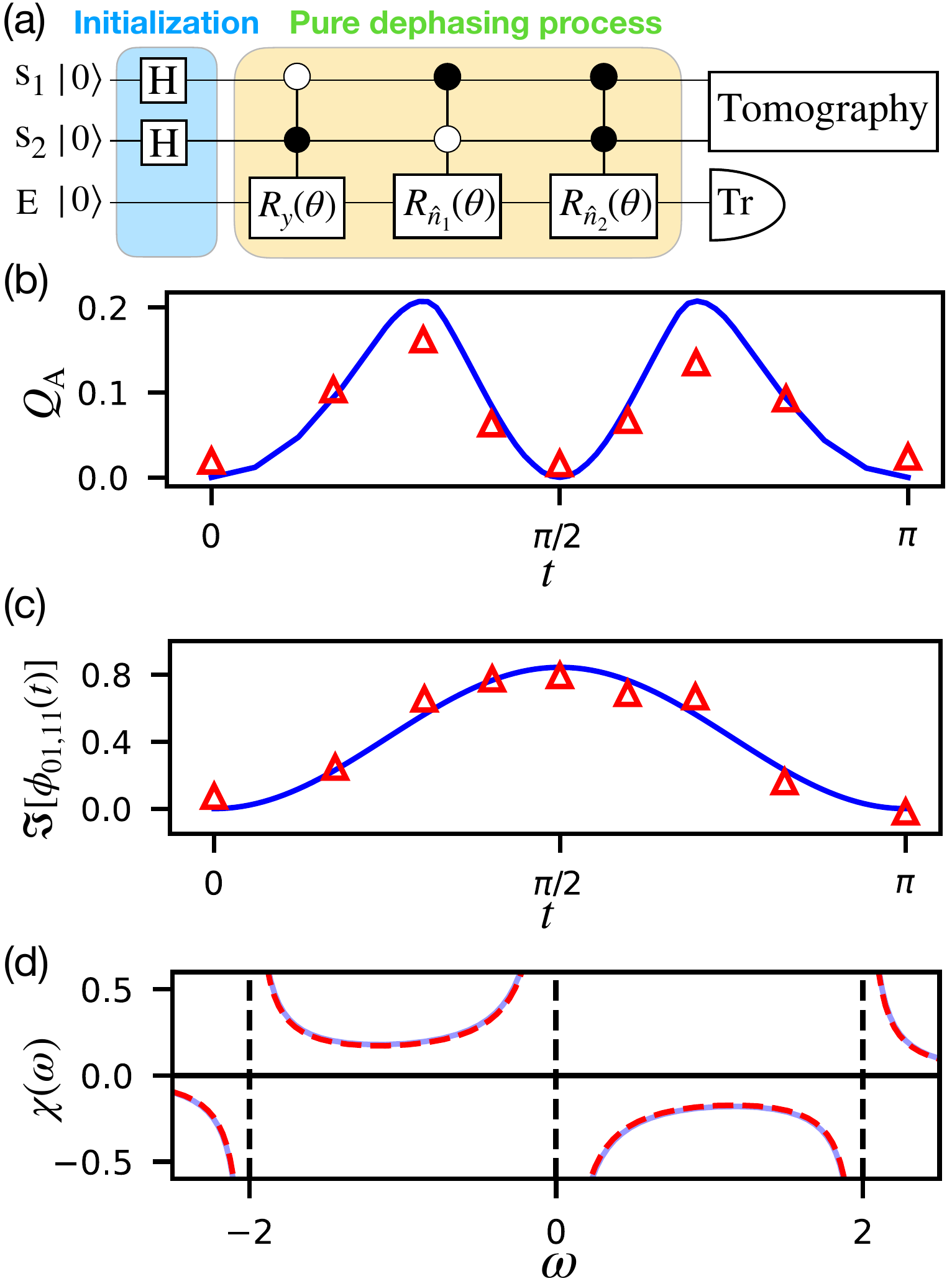}
    \caption{
    (a) Circuit implementation of a pure dephasing dynamics governed by Eq.~\eqref{eq: Hamiltoninan with single qubit environment}, where we set $\theta =2t$.
    (b) Theoretical (blue curve) and experimental (red triangles) results for the NMU measure $Q_A$ as a function of time.
    (c) The fitting curve (blue curve) and the experimental data (red triangles) for the dephasing factor $\Im[\phi_{01,11}(t)]$.
    (d) Comparison of the theoretical distribution $\chi(\omega)$ (solid blue curve) and the reconstructed distribution $\chi_\text{fit}(\omega)$ (red dashed curve).}
    \label{fig1}
\end{figure}
Here, we demonstrate the aforementioned advantages by applying it to a dephasing model implemented on the trapped-ion quantum processor, System Model H1 by Quantinuum~\cite{quantinuum}. The model consists of a two-qubit system interacting with a single-qubit environment, governed by the Hamiltonian
\begin{equation}
    H = \ket{01}\bra{01} \otimes \sigma_{y} + \ket{10}\bra{10} \otimes \hat{n}_{1} \cdot \vec{\sigma} + \ket{11}\bra{11} \otimes \hat{n}_{2} \cdot \vec{\sigma},\label{eq: Hamiltoninan with single qubit environment}
\end{equation}
with the unit vectors $\hat{n}_{1}=(\sqrt{3}/2,-1/2)$, $\hat{n}_{2}=(-\sqrt{3}/2,-1/2)$, and the two-dimensional Pauli vector $\vec\sigma = (\sigma_x, \sigma_y)$. The environment qubit is prepared in $\rho_\e = \ket{0}\bra{0}$. 

The circuit implementation of this model is presented in Fig.~\ref{fig1}(a) with the system qubits $\s_1$ and $\s_2$ and the environmental qubit E. The pure dephasing process is implemented by a sequence of three controlled-controlled rotation gates, where $R_{\hat{n}}(\theta)$ denotes a single-qubit rotation about the axis $\hat{n}$ by an angle $\theta$. The time evolution is simulated by tuning the rotation angle such that $\theta =2 t$. As presented in Appendix~\ref{Appendix E}, each controlled-controlled rotation gate can be decomposed into four two-qubit controlled-Z gates and several single-qubit gates, all of which are native to the Quantinuum device. Consequently, the entire circuit involves a total of twelve two-qubit gates.

To experimentally obtain the dynamics of pure dephasing factors, we prepare the system qubits in the maximally coherent state $(\ket{00}+\ket{01}+\ket{10}+\ket{11})/2$ using Hadamard (H) gates. State tomography is then performed to monitor the evolution of the off-diagonal elements, with the corresponding dephasing factors defined as $\rho_{\s,ij,kl}(t) = \phi_{ij,kl}(t)/4$. The environment qubit E is left unmeasured (i.e., traced out). The experimental results presented below are obtained from state tomography with 300 shots. 

Following Ref.~\cite{audenaert2008random}, the NMU of a channel $\Lambda$ can be quantified by the entanglement of assistance of its Choi state:
\begin{equation}
    \mathcal{J}(\Lambda) = \mathcal{I}_{\s'} \otimes \Lambda (\ket{\Phi^+}\bra{\Phi^+}).
\end{equation}
Here, $\mathcal{I}_{\s'}$ denotes the identity map on the ancillary system $\s'$, and $\ket{\Phi^+}$ is the maximally entangled state,
\begin{equation}
    \ket{\Phi^+} = \frac{1}{\sqrt{d_\s}}\sum_{i=0}^{d_\s-1}\ket{i_{\s'}}\otimes \ket{i_\s}.
\end{equation}
The entanglement of assistance is defined as~\cite{divincenzo1998entanglement}
\begin{equation}
\begin{aligned}
    \Ea[\mathcal{J}(\Lambda)] = \max_{\{p_i,\psi_i\}} \sum_i p_i S(\tr_\s \ket{\psi_i}\bra{\psi_i}),
\end{aligned}
\end{equation}
where the maximization runs over all pure-state decompositions, i.e. $\mathcal{J}(\Lambda) = \sum_i p_i \ket{\psi_i}\bra{\psi_i}$, and $S(\cdot)$ denotes the von-Neumann entropy.

For a MU channel, $\Lambda_{\text{MU}}(\bullet) = \sum_j p_j U_j \bullet U_j^\dagger $, the Choi state is a mixture of maximally entangled states: 
\begin{equation}
\begin{aligned}
    \mathcal{J}(\Lambda_{\text{MU}}) &= \sum_j p_j \ket{U_j}\bra{U_j},\\
    \ket{U_j} &= \frac{1}{\sqrt{d_\s}}\sum_{i=0}^{d_\s-1}\ket{i_{\s'}} \otimes U_j\ket{i_\s},
\end{aligned}
\end{equation}
which attains the maximal entanglement of assistance~\cite{audenaert2008random}:
\begin{equation}
    \Ea[\mathcal{J}(\Lambda_\text{MU})] = \text{E}_{A,\text{max}} =\log d_\s.
\end{equation}
Therefore, deviations from this maximum indicate NMU and can be quantified by
\begin{equation}
    Q_\text{A}(\rho_\Lambda) = \text{E}_\text{A,max} - \Ea[\mathcal{J}(\Lambda)], \label{eq: QA}
\end{equation} 
with $Q_\text{A}>0$ signaling the presence of NMU. 

As illustrated in Fig.~\ref{fig1}(b), the dynamics of $Q_\text{A}$ shows a clear signature of NMU. Both the theoretical predictions and experimental results agree well, exhibiting nonzero NMU except at the special times $t \neq 0,\pi/2,\pi$, where the channel reduces to a (nearly) MU channel. This confirms that deviations from the maximal entanglement of assistance provide a faithful indicator of NMU.

In addition to NMU, the model enables direct observation of HE incompatibility through the quasi-probability distribution $p(\omega)$ associated with the dephasing factor $\phi_{01,11}(t)$. The analytical expressions are
\begin{equation}
    \begin{aligned}
        \phi_{01,11}(t)&= \cos^2(t)-\frac{1}{2}\sin^2(t)+i \frac{\sqrt{3}}{2}\sin^2(t)\mathrm{sgn}(t),\\
        p(\omega) &= \frac{1}{2\pi} \int dt~\phi_{01,11}(t) e^{i\omega t} \\&=\frac{1}{4} \delta (\omega ) + \frac{3}{8} \delta (\omega -2)+ \frac{1}{8} \delta (\omega +2)+\chi(\omega). \label{eq: phi_01,11}
    \end{aligned}
\end{equation}
Following the procedure in Ref.~\cite{PhysRevLett.120.030403}, we manually introduce the factor $\mathrm{sgn}(t)$ in the imaginary part of the dephasing factor, $\Im[\phi_{01,11}(t)]$. This ensures that $p(\omega)$ is a real function, without altering the observed dephasing factor for $t\geq 0.$
The negativity arises through the final term 
\begin{equation}
    \chi(\omega) = \frac{\sqrt{3}}{\pi{\omega(\omega-2)(\omega+2)}},
\end{equation}
which is obtained from the Fourier transform of $\Im[\phi_{01,11}(t)]$. Specifically, $\chi(\omega)$ takes negative values when $\omega \in (-\infty, -2) \cup (0,2)$. 

To assess this feature experimentally, we fit the dephasing factor $\Im[\phi^{\text{fit}}_{01,11}(t)]$ to the experimental data and reconstruct the distribution $\chi_\text{fit}(\omega)$. Guided by the analytical expression in Eq.~\eqref{eq: phi_01,11}, we adopt the following ansatz:
\begin{equation}
    \Im[\phi^{\text{fit}}_{01,11}(t)] = \alpha \frac{\sqrt{3}}{2} \sin^2(t),
\end{equation}
with $\alpha$ as a fitting parameter, resulting in $\chi_\text{fit}(\omega) = \alpha\chi(\omega)$. The optimal fit, $\alpha\approx 0.9739$, shows good agreement with the experimental data [Fig.~\ref{fig1}(c)]. Moreover, as shown in Fig.~\ref{fig1}(d), the reconstructed distribution agrees well to the theoretical prediction and clearly exhibits negativity for $\omega<-2$ and $0<\omega<2$. This provides direct evidence of the HE incompatibility. Thus, system-environment entanglement is reflected in both the observed NMU and the negativity in the quasi-probability distribution.

Note that the model exhibits a built-in discrete time-translational symmetry: $\rho_{\s\e}(t+n\pi) = \rho_{\s\e}(t)$ with $n=1,2,3,\cdots$, which reduces the need to track the full-time dynamics to access HE incompatibility. Nevertheless, reconstructing $\chi_\text{fit}(\omega)$ still requires sampling the system's dynamics at multiple time points within $t\in[0,\pi]$. As shown in Fig.~\ref{fig1}(b), we use seven time points and obtain the best-fit dephasing factor to reconstruct the quasi-distribution. In contrast, NMU can be certified from a single time point as long as its value is positive. This illustrates that NMU offers a more economical approach to entanglement certification than HE incompatibility.

In addition, while a single dephasing factor is sufficient to reveal quasi-distribution negativity in this model, this feature is not generic to all pure dephasing processes with HE incompatibility. In general, dephasing factors correspond only to marginals of a higher-dimensional joint distribution that fully characterizes the underlying dynamics~\cite{chen2019quantifying}. Recently, machine learning techniques have been explored to aid in reconstructing these distributions~\cite{chen2024unveiling}; however, a general and systematic approach remains unavailable, and the task is computationally demanding even for two-qubit systems. By contrast, detecting system-environment entanglement via NMU offers a significantly more efficient alternative, both experimentally and computationally.

\section{Witnessing gravitationally induced entanglement \label{sec4}}
\begin{figure}
    \centering
    \includegraphics[width=0.95\linewidth]{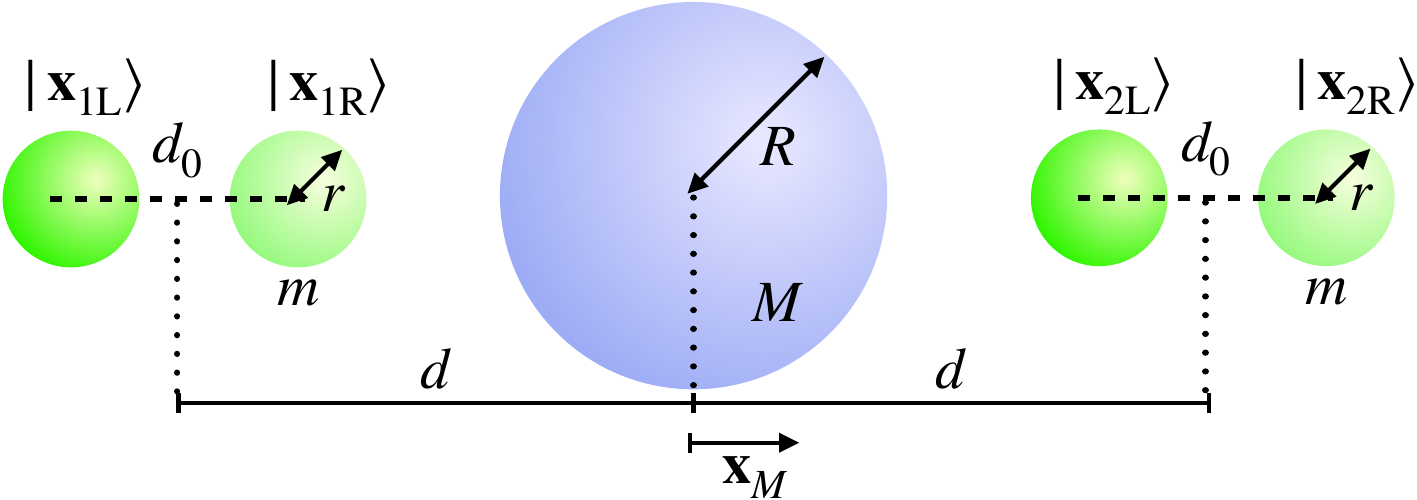}
    \caption{Schematic illustration of gravitational interaction between test particles and a mechanical oscillator.}
    \label{fig2}
\end{figure}
In this section, we demonstrate that NMU can serve as a viable approach for certifying gravitationally induced entanglement between test particles of mass $m$ and a mechanical oscillator $M$, as described in Refs.~\cite{PRXQuantum.2.030330, PhysRevLett.128.110401}. As illustrated in Fig.~\ref{fig2}, we consider two test particles (labeled 1 and 2) positioned near the oscillator. Each particle can be prepared in a quantum superposition of two spatially distinguishable states, $\ket{\mathbf{x}_{1(2)\text{L}}}$ and $\ket{\mathbf{x}_{1(2)\text{R}}}$, separated by a distance $d_0 = |\mathbf{x}_{1(2)\text{L}} - \mathbf{x}_{1(2)\text{R}}|$. Here, L and R denote the left and right positions, respectively. The distance between the center of mass of the oscillator $\mathbf{x}_M$ and each test particle is given by $d = |\mathbf{x}_M - \mathbf{x}_1| = |\mathbf{x}_M - \mathbf{x}_2|$, with $\mathbf{x}_{1(2)} = (\mathbf{x}_{1(2)\text{L}} - \mathbf{x}_{1(2)\text{R}})/2$.

The interaction between the test particles and the oscillator is governed by the Newtonian gravitational potential, $V(\mathbf{x}_M, \mathbf{x}_m) = -{G M m}/{|\mathbf{x}_M - \mathbf{x}_m|}.$ Assuming $d \gg d_0, |\mathbf{x}_M|$, the total Hamiltonian can be written as
\begin{equation}
    \frac{H}{\hbar} = \tilde{\omega} a^\dagger a + \sum_{i=1,2} \omega_m \sigma_{z,i} + g \sigma_{z,i}(a + a^\dagger),
\end{equation}
where $a$ and $a^\dagger$ are the annihilation and creation operators of the mechanical oscillator, and the pseudo-spin operator is defined as
\begin{equation}
    \sigma_{z,i} = \ket{\mathbf{x}_{i,\text{L}}}\bra{\mathbf{x}_{i,\text{L}}} - \ket{\mathbf{x}_{i,\text{R}}}\bra{\mathbf{x}_{i,\text{R}}}.
\end{equation}
The oscillator's frequency is gravitationally shifted to
\begin{equation}
    \tilde{\omega} = \sqrt{\omega^2 - \frac{2Gm}{d^3}},
\end{equation}
where $\omega$ is the bare frequency. The frequency shift experienced by the test particles is given by
\begin{equation}
    \omega_m = \frac{GMm d_0}{2\hbar d^2},
\end{equation}
and the coupling strength between each particle and the oscillator is
\begin{equation}
    g = -\frac{G m d_0}{d^3} \sqrt{\frac{M}{2\hbar \tilde{\omega}}}. \label{eq: gravitation induced coupling}
\end{equation}

We consider both the test particles and the oscillator to be composed of silica with mass density $\varrho = 2400~\mathrm{kg\cdot m^{-3}}$. The radii of the test particles and the oscillator are $r = 70 \times 10^{-9}~\mathrm{m}$ and $R = 7 \times 10^{-6}~\mathrm{m}$, respectively. We set $d=175~\mathrm{\mu m}$, $d_0=500~\mathrm{nm}$. According to Eq.~\eqref{eq: gravitation induced coupling}, the coupling strength is inversely proportional to the oscillator frequency. Here, we consider a low-frequency oscillator with $\tilde{\omega} = 0.01~\mathrm{Hz}$, which is of the same order as that used in Ref.~\cite{PRXQuantum.2.030330}, resulting in a dimensionless coupling strength of $g/\tilde{\omega} \approx -2.75 \times 10^{-9}$. Also we consider that the oscillator is cooled to $T=1~\text{mK}$.

In this case, the maximal decay of the decoherence factors for this model is expected at the interaction time $t^*=\pi/\tilde{\omega}\approx 314~\text{s}$, yielding $|\phi_{ij,kl}(t=\pi/\tilde{\omega}) - \phi_{ij,kl}(0)|\sim 10^{-6}$. As reported in Ref.~\cite{PRXQuantum.2.030330}, such a long interaction time and small coherence change might be within reach of near-future experiments. Accessing the full-time dynamics remains highly demanding, making the NMU a more feasible tool for certifying particle-oscillator entanglement than HE incompatibility. Our numerical results show that $Q_\text{A}\approx 1.26 \times 10^{-6}$ at $t^*$, with the numerical precision $\epsilon\sim 10^{-10}$. For comparison, the baseline value obtained by randomly generated MU channel is $Q_{\text{A,MU}}\sim10^{-9}$. Thus, the estimated NMU exceeds the baseline by three orders of magnitude, providing strong evidence of gravitationally induced entanglement.

Finally, in comparison to the coherence revival detection scheme proposed in Ref.~\cite{PRXQuantum.2.030330}, a key advantage of the NMU approach is that it does not rely on the assumption that gravity-induced dephasing follows Lindblad dynamics with time-translational invariance. Therefore, the NMU can unambiguously certify entanglement generation, rather than ruling out certain classes of classical stochastic gravity models~\cite{PRXQuantum.2.030330,bassi2017gravitational,anastopoulos2022gravitational,PhysRevX.13.041040}.

\section{Discussion and Conclusion \label{sec5}}
We address the challenge of certifying system–environment entanglement in open quantum systems, focusing on the widely encountered class of pure dephasing dynamics. Motivated by the notion of HE incompatibility, we introduce a generalized and experimentally efficient approach based on the concept of NMU.

The NMU method overcomes key limitations of the HE incompatibility. Specifically, it is applicable to non-autonomous dynamics, enables entanglement certification at arbitrary time points without requiring access to the full-time evolution, and remains valid even when pure dephasing holds only approximately over a short-time regime. These features make the NMU approach an effective tool for entanglement detection.

We experimentally validate the method using a trapped-ion quantum processor, demonstrating that NMU can witness entanglement using fewer measurement resources than those required by the HE method. Finally, we propose that the NMU provides a promising strategy for certifying gravitationally induced entanglement in scenarios involving Newtonian interactions between massive particles, offering a conceptually transparent and experimentally feasible alternative to existing proposals.

Several open questions remain. The present work is limited to pure dephasing scenarios. It would therefore be valuable to explore the implications of system–environment correlations inferred from NMU in general unital dynamics. Furthermore, as suggested by the numerical results presented in Appendix~\ref{Appendix D}, we conjecture that NMU may serve not only as a witness of system–environment entanglement, but also as an indicator of its magnitude. Intuitively, the quantification of HE incompatibility proposed in Ref.~\cite{chen2019quantifying} may offer similar utility. We leave the formal validation of this conjecture as a promising direction for future work.

\section*{Acknowledgement}
The authors acknowledge fruitful discussions with Hong-Bin Chen. This work is supported by the National Center for Theoretical Sciences and National Science and Technology Council, Taiwan, Grants No. NSTC 114-2112-M-006-015-MY3. N. L. is supported by the RIKEN Incentive Research Program and by MEXT KAKENHI Grant Numbers JP24H00816, JP24H00820. A.M. was supported by the Polish National Science Centre (NCN) under the Maestro Grant No. DEC-2019/34/A/ST2/00081. F.N. is supported in part by: the Japan Science and Technology Agency (JST) [via the CREST Quantum Frontiers program Grant No. JPMJCR24I2, the Quantum Leap Flagship Program (Q-LEAP), and the Moonshot R\&D Grant Number JPMJMS2061], and the Office of Naval Research (ONR) Global (via Grant No. N62909-23-1-2074).

\appendix
\section{Formal proof of Theorem 1 \label{Appendix A}}
In the main text, we assume that $\rho_\e$ is full-rank and non-degenerate. Here, we relax this assumption such that a general environmental state can be expressed as
\begin{equation}
    \rho_\e = \sum_n q_n \Pi_n. \label{eq: general rhoe}
\end{equation}
Here, we set $q_n>0$ and $q_n\neq q_m~\forall~n\neq m$. Also, $\{\Pi_n\}$ forms an incomplete set of orthonormal projectors with $\sum_n \Pi_n < \mathbb{1}_\e$, implying that $\rho_\e$ is not full-rank. Therefore, we can express the Hilbert space of E as $\mathcal{H}_\e = \bigoplus_n \mathcal{H}_n \oplus \mathcal{H}_\perp$, where $\mathcal{H}_n$ denotes the eigensubspace defined by $\Pi_n$ and $\mathcal{H}_\perp$ represents the subspace that is orthogonal to the support of $\rho_\e$. In addition, we require $\text{rank}( \Pi_n)\geq 1$ to capture the possible degeneracy in $\rho_\e$. 

Let us now turn to the zero-discordant criteria described in Eq.~\eqref{eq: zero discord criteria}, which leads to 
\begin{equation}
    W_{ij} \Pi_n W_{ij}^\dagger = \Pi_n~\forall n.
\end{equation}
This implies that the unitary operator $W_{ij}$ preserves the eigen-subsapces and can be expressed as direct summation of local unitary operators acting on the eigen-spaces, namely, 
\begin{equation}
    W_{ij} = \bigoplus_n W_{ijn}\oplus W_{\perp,ij}~~\forall i,j,
\end{equation}
where $W_{ijn}$ and $W_{\perp,ij}$ are unitary operators respectively acting on the corresponding subspaces $\mathcal{H}_n$ and $\mathcal{H}_\perp$.
In general, the local unitary $W_{ijn}$ operator can be diagonalized under a basis for the corresponding eigen-subspace $\mathcal{H}_n$:
\begin{equation}
    W_{ijn} = \sum_{m} \exp({i\theta_{ijnm}}) \ket{e_{nm}}\bra{e_{nm}}.
\end{equation}
Similarly, we can also express the projectors with the same basis, i.e. $\Pi_n = \sum_m \ket{e_{nm}}\bra{e_{nm}}$. As mentioned in the main text, utilizing the relation $W_{ij}W_{ik}^\dagger = W_{kj}$, we obtain $W_{ijn}W_{ikn}^\dagger = W_{kjn}$ and $\theta_{ijnm}-\theta_{iknm} = \theta_{kjnm}$. This implies that the phase $\theta_{ijnm}$ can be written as a difference of two other phases, i.e., $\theta_{ijnm} = \phi_{inm}-\phi_{jnm} $. Therefore, the reduced evolution of S can be written as:
\begin{equation}
    \begin{aligned}
        \tilde{\rho}_\s &= \sum_{i,j} \rho_{\s,ij} \tr(W_{ij}\rho_\e) \ket{i}\bra{j} \\
        &= \sum_{n,m} q_n\exp(\phi_{inm}-\phi_{jnm})~\rho_{\s,i,j} \ket{i}\bra{j}  \\
        &= \sum_{n,m} q_n~ \tilde{U}_{nm}~\rho_\s~ \tilde{U}^\dagger_{nm},
    \end{aligned}
\end{equation}
with $\tilde{U}_{nm} = \exp(i\phi_{inm})\ket{i}\bra{i}$. This completes our proof of Theorem~\ref{thm1} for general environmental states.

\section{Corrected qutrit-like conditions \label{Appendix B}}

As shown in Ref.~\cite{PhysRevA.98.052344}, the qutrit-like conditions can be derived by treating the system as a qutrit, assuming the qubit-like conditions is already satisfied. Here, we follow a similar approach to obtain a corrected version of these conditions.

Consider now the system is prepared in a qutrit pure state:
\[\rho_\s =\ket{\psi}\bra{\psi} \text{~~with~~} \ket{\psi}= \sum_i c_i \ket{i}.\] 
In this case, the system-environment joint state $\rho_{\s\e}$ can be expressed by a $3\times 3$ block matrix, namely,
\begin{equation}
    \rho_{\s\e}=\begin{pmatrix}
        |c_0|^2~R & c_0 c_1^*~R_{01} & c_0 c_2^*~ R_{02}\\
        c_1 c_0^*~R_{10} & |c_1|^2~R & c_1 c_2^*~ R_{12}\\
        c_2 c_0^*~R_{20} & c_2 c_1^*~R_{21} & |c_2|^2~ R
    \end{pmatrix},
\end{equation}
Here, we define $R_{ij}=V_i \rho_\e V^\dagger_j~~\forall~i,j$. Due to the qubit-like conditions in Eq.~\eqref{eq: qubit-like condition}, we have $R_{ii}= R~~\forall i$.

Following a similar procedure in Eq.~\eqref{eq: general rhoe}, we consider the spectral decomposition of $R$, which can be written as
\begin{equation}
    R =\sum_n q_n \Pi_n.
\end{equation}
This implies that the Hilbert space of E can be decomposed as $\mathcal{H}_\e = \bigoplus_n \mathcal{H}_n \oplus \mathcal{H}_\perp$, where each $\mathcal{H}_n$ represents the eigensubspace corresponding to the projector $\Pi_n$, and $\mathcal{H}_\perp$ denotes the subspace orthogonal to the support of $R$.

The qubit-like conditions in Eq.~\eqref{eq: qubit-like condition} can also be expressed as $[R,Y_{ij}]=0$ with $Y_{ij}=V_i V_j^\dagger$. This also implies the unitary operator $Y_{i,j}$ can be decomposed as 
\begin{equation}
    Y_{ij} = \bigoplus_n Y_{ijn} \oplus Y_{\perp,ij},
\end{equation}
where $Y_{ijn}$ and $Y_{\perp,ij}$ are local unitary operators acting on the corresponding subspaces.
Therefore, we obtain
\begin{equation}
    R_{ij} = \bigoplus_n q_n Y_{ijn}\oplus\mathbf{0}_\perp.
\end{equation}
In other words, the local unitary operators $Y_{\perp,ij}$ becomes irreverent for determining whether $\rho_{\s\e}$ is separable. 

Since $[R,Y_{01}]=0$ implies $[\Pi_n, Y_{01n}]=0$, there exists a basis $\{\ket{e_{nm}}\}$ such that
\begin{equation}
\begin{aligned}
    \Pi_n =\sum_m \ket{e_{nm}}\bra{e_{nm}}
    \text{~and~} Y_{01n} =\sum_m e^{i\theta_{nm}} \ket{e_{nm}}\bra{e_{nm}}.
\end{aligned}
\end{equation}
Notably, $[R,Y_{ij}]=0$ does not guarantee $[Y_{01},Y_{02}]=0$, and therefore, $[Y_{01n},Y_{02n}]=0$ is not guaranteed. As a result, $Y_{02n}$ is in general not diagonalized under $\{\ket{e_{nm}}\}$. We can thus express it as
\begin{equation}
    Y_{02n} = \sum_{m,m'}x_{nmm'} \ket{e_{nm}}\bra{e_{nm'}}.
\end{equation}
Moreover, since $Y_{12}= V_1 V_2^\dagger= V_1 V_0^\dagger V_0 V_2^\dagger = Y_{01}^\dagger Y_{02}$, we have
\begin{equation}
    \begin{aligned}
    Y_{12n}&=Y_{01n}^\dagger Y_{02n}\\
    &=\sum_{m,m'} e^{-i\theta_{nm'}}~x_{nmm'}\ket{e_{nm}}\bra{e_{nm'}}.
    \end{aligned}
\end{equation}

To certify system-environment entanglement, we employ the Peres–Horodecki criterion~\cite{PhysRevLett.77.1413,
HORODECKI19961}, which involves performing a partial transposition on the system:
\begin{equation}
    \rho_{\s\e}^{T_\s}=\begin{pmatrix}
        |c_0|^2~R & c_1 c_0^*~R_{10} & c_2 c_0^*~ R_{20}\\
        c_0c_1^*~R_{01} & |c_1|^2~R & c_2 c_1^*~ R_{21}\\
        c_0c_2^*~R_{02} & c_1 c_2^*~R_{12} & |c_2|^2~ R
    \end{pmatrix}.
\end{equation}
It is known that $\rho_{\s\e}^{T_\s}$ must be positive semidefinite if $\rho_{\s\e}$ is separable. As shown in Ref.~\cite{PhysRevA.98.052344}, this condition can be verified by checking a class of principal minors defined as 
\begin{widetext}
    \begin{equation}
    \begin{aligned}
        D_{(nm);(n'm')} &= \text{det}
        \begin{pmatrix}
            |c_0|^2~[R]_{(nm);(nm)} & c_1 c_0^*~[R_{1,0}]_{(nm);(nm)} & c_2 c_0^*~ [R_{2,0}]_{(n'm');(nm)}\\
        c_0c_1^*~[R_{0,1}]_{(nm);(nm)} & |c_1|^2~[R]_{(nm);(nm)} & c_2 c_1^*~ [R_{2,1}]_{(n'm');(nm)}\\
        c_0c_2^*~[R_{0,2}]_{(nm);(n'm')} & c_1 c_2^*~[R_{1,2}]_{(nm);(n'm')} & |c_2|^2~[R_{2,2}]_{(n'm');(n'm')}
        \end{pmatrix}\\
        &=\text{det}
        \begin{pmatrix}
            |c_0|^2~q_n & c_1 c_0^*~q_n e^{-i\theta_{nm}} & c_2 c_0^*~ q_n x^*_{nmm'}\delta_{nn'}\\
        c_0c_1^*~q_n e^{i\theta_{nm}} & |c_1|^2~q_n & c_2 c_1^*~ q_n e^{i\theta_{nm'}} x^*_{nmm'}\delta_{nn'}\\
        c_0c_2^*~q_n x_{nmm'}\delta_{nn'} & c_1 c_2^*~q_n e^{-i\theta_{nm'}} x_{nmm'}\delta_{nn'} & |c_2|^2~q_{n'}
        \end{pmatrix},
    \end{aligned}
    \end{equation}
\end{widetext}
with $[R_{ij}]_{(nm);(n'm')} = \bra{nm}R_{i,j}\ket{n'm'}$. If any of these principal minors is negative, it implies that $\rho_{\s\e}^{T_\s}$ is not positive semidefinite, and thus the system-environment state is entangled.

For the cases where $n\neq n'$, we find that $D_{(nm);(n'\neq n~m')} =0$.
This means that it suffices to consider the cases where $n'=n$, for which the principal minors can be written as 
    \begin{equation}
    \begin{aligned}
        &D_{(nm);(nm')}\\
    &=-2|c_0 c_1 c_2|^2 q_n^3 |x_{nmm'}|^2\left[1-\cos(\theta_{nm}-\theta_{nm'}) \right].
    \end{aligned}
    \end{equation}
From this, we see that $D_{(nm);(nm')}=0$ only if $\cos(\theta_{nm}-\theta_{nm'})=1$ or $x_{nmm'}=0$. Otherwise, entanglement can be certified by a negative principal minor. Furthermore, if all of these principal minors vanish, it implies that 
\begin{equation}
    [Y_{ijn},Y_{kln}]=0~~~~\forall i,j,k,l.\label{eq: corrected qutrit-like conditions 1}
\end{equation}
In this case, we can also conclude that $[R_{ij},R_{kl}]=0$, which indicates that $\rho_{\s\e}$ is a separable state and zero-discordant state from E to S. Therefore, the corrected form of the qutrit-like conditions should be given by Eq.~\eqref{eq: corrected qutrit-like conditions 1}, or equivalently,   
\begin{equation}
    R[V_iV_j^\dagger,V_kV_l^\dagger]R=0.~\label{eq: corrected qutrit-like conditions 2}
\end{equation}

\section{Formal proof of Theorem 2 with corrected qutrit-like criteria \label{Appendix C}}
In the main text, we show that the separability criteria in Eqs.~\eqref{eq: qubit-like condition} and \eqref{eq: qutrit-like condition} implies the zero-discordant condition in Eq.~\eqref{eq: zero discord criteria}. However, the converse does not hold, which leads to an apparent inconsistency.

Here, we show that the inconsistency can be resolved by considering the corrected qutrit-like conditions in Eq.~\eqref{eq: corrected qutrit-like conditions 2}. First, as stated in Eq.~\eqref{eq: zero discord criteria 2} of the main text, the zero-discord criteria imply $V_i \rho_\e^2 V_i^\dagger=V_j \rho_\e^2 V_j^\dagger$, which is equivalent to $V_i \rho_\e V_i^\dagger=V_j \rho_\e V_j^\dagger=R$. This means that the zero-discord conditions leads to the qubit-like conditions. Moreover, through a straightforward algebraic procedure, we can demonstrate that
\begin{equation}
    \begin{aligned}
    &[V_i \rho_\e V_j^\dagger, V_k \rho_\e V_l^\dagger] =0 \\
    \implies& V_i \rho_\e V_j^\dagger V_k \rho_\e V_l^\dagger = V_k\rho_\e V_l^\dagger V_i \rho_\e V_j^\dagger \\
    \implies &V_i \rho_\e (V_i^\dagger V_i) V_j^\dagger V_k (V_l^\dagger V_l)\rho_\e V_l^\dagger \\&= V_k\rho_\e (V_k^\dagger V_k) V_l^\dagger V_i (V_j^\dagger V_j) \rho_\e V_j^\dagger \\
    \implies & R V_i V_j^\dagger V_k V_l^\dagger R = R V_k V_l^\dagger V_i V_j^\dagger R\\
    \implies &R[V_i V_j^\dagger , V_k V_l^\dagger ]R =0.
    \end{aligned}
\end{equation}
Therefore, we conclude that the zero-discord conditions also imply the corrected qutrit-like conditions, completing the proof.

\section{Conjecture: estimating negativity by non-mixed unitarity \label{Appendix D}}
\begin{figure}
    \centering
    \includegraphics[width=0.9\linewidth]{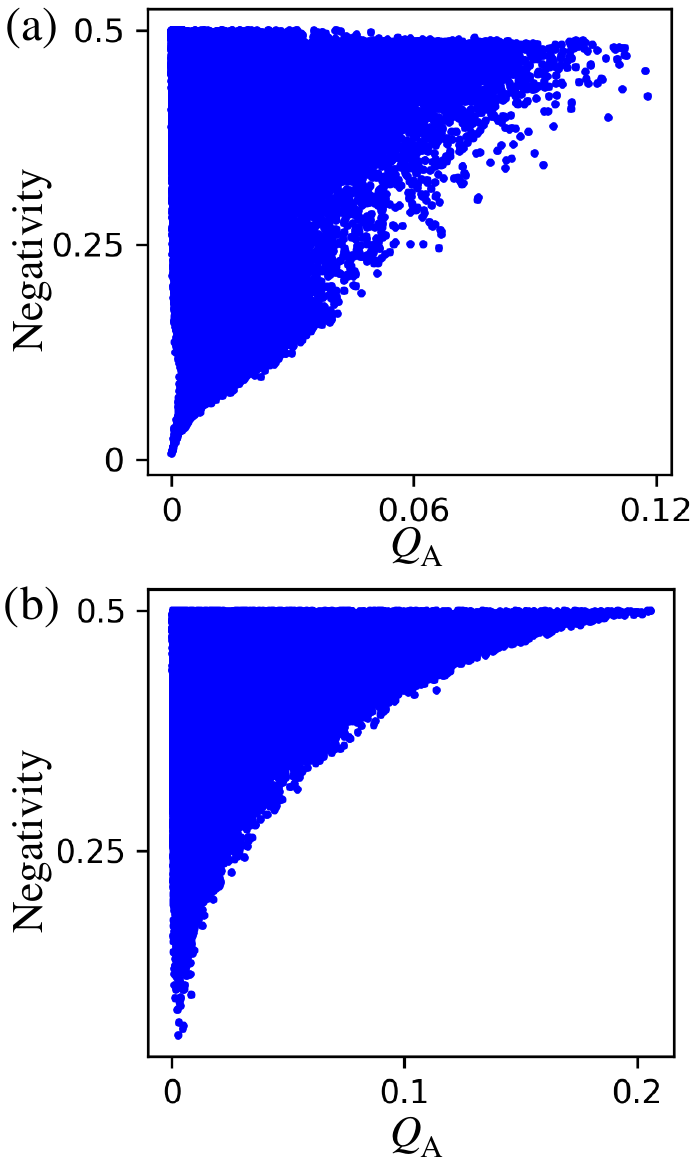}
    \caption{Negativity of the system-environment state [Eq.~\eqref{eq: negativity}] verses the NMU measure $Q_\text{A}$ for (a) $d_\s = 3$ and (b) $d_\s=4$ with randomly sampled pure dephasing maps.  }
    \label{fig3}
\end{figure}

Here, we randomly generate pure dephasing maps for three-level ($d_\s = 3$) and four-level ($d_\s =4$) systems, each coupled to a qubit environment initially prepared in a pure state, and investigate the relationship between the NMU and the resulting system–environment entanglement.

In each run, one can consider a qubit initial state $\ket{\psi}$ and a set of local unitaries acting on the environment qubit $\{V_i\}_{i=0,\cdots,d_\s-1}$ to determine the system-environment global unitary described by [see also Eq.~\eqref{eq: U_SE}]:
\begin{equation}
    U_{\s\e} =\sum_{i=0}^{d_\s-1}\ket{i}\bra{i}\otimes V_i. \label{eq: U_SE appendix}
\end{equation}
The corresponding pure dephasing map is completely specified by $\{V_i\ket{\psi} \}_{i=0,\cdots,d_\s-1}$. To increase the sampling efficiency, rather than separately generating the set of local unitaries and the initial qubit state, we instead directly generate a set of random pure qubit states, $\{\ket{\psi_i}\}_{i=0,\cdots,d_\s}$. We then compute the NMU quantified by $Q_\text{A}$ [defined by Eq.~\eqref{eq: QA}] for the assicated pure dephasing map.

As shown in Ref.~\cite{takahashi2025coherence}, the amount of entanglement generated in such a pure dephasing scenario is upper bounded by the initial coherence of the system. To maximize this potential entanglement, we initialize the system in the maximally coherent state $\ket{\psi_\s}=\sum_{i=0}^{d_\s-1}\ket{i}/\sqrt{d_\s}$. In this case the joint system-environment state evolves into
\begin{equation}
    \ket{\Psi_{\s\e}} =\frac{1}{\sqrt{d_\s}} \sum_{i=0}^{d_\s-1}\ket{i} \otimes \ket{\psi_i} 
    \label{eq: radom Psi_SE}
\end{equation}
To quantify the generated entanglement, we compute the negativity of the joint state, which is given by
\begin{equation}
    \mathcal{N}=\frac{  \|\rho_{\s\e}^{T_\s}\|_1 -1 }{2}, \label{eq: negativity}
\end{equation}
where $\rho_{\s\e}=\ket{\Psi_{\s\e}}\bra{\Psi_{\s\e}}$, $T_\s$ denotes the partial transposition on S, and $\|\cdot\|_1$ is the trace norm.

In Fig.~\ref{fig3}, we present results based on $10^{6}$ and $10^{5}$ randomly sampled pure dephasing maps for $d_\s=3$ and $d_\s =4$, respectively. For each sample, we compute both NMU and the resulting entanglement. For both cases, the results show a monotonic increase in the minimum negativity as $Q_\text{A}$ grows, indicating that NMU may not only serve as witness but also a quantifier of system-environment entanglement under pure dephasing scenario.

\section{Circuit decomposition of the controlled-controlled rotation gate \label{Appendix E}}
\begin{figure}
    \centering
    \includegraphics[width=1\linewidth]{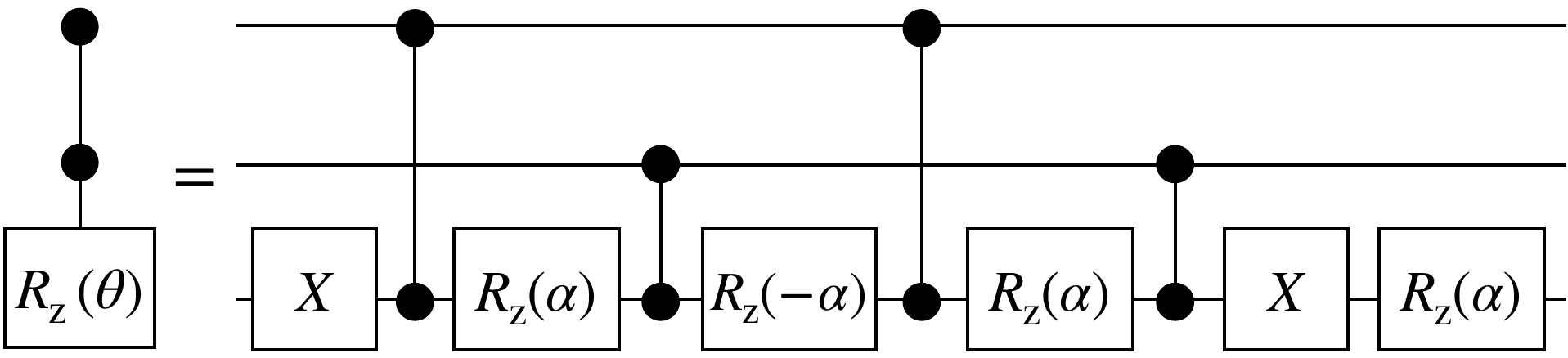}
    \caption{Circuit decomposition of a CCRZ gate with $\alpha = \theta/4$.}
    \label{fig4}
\end{figure}
Here, we present circuit decomposition of controlled-controlled rotation gate into CZ gate and single-qubit gates. As shown in Fig.~\ref{fig4}, a controlled-controlled rotation Z (CCRZ) gate can be decomposed into four controlled Z (CZ) gates (represented by a vertical line with solid circles at the ends), two X gates, and four rotation-Z ($\text{R}_{\text{z}}$) gates, where the gates are defined as 
\begin{equation}
    \begin{aligned}
         \text{CCRZ}(\theta) &= \sum_{\substack{i,j \in \{0,1\} \\ (i,j) \neq (1,1)}}\ket{i,j}\bra{i,j}\otimes \mathbb{1} + \ket{1,1}\bra{1,1}\otimes \text{R}_\text{z}(\theta),\\
        \text{CZ} &= \ket{0}\bra{0}\otimes \mathbb{1} + \ket{1}\bra{1}\otimes \sigma_z, \\
        \text{R}_\text{z}(\theta) &= \cos(\theta/2)\mathbb{1}-i\sin(\theta/2)\sigma_z,\\
        X  &= \ket{0}\bra{1} + \ket{1}\bra{0}.
    \end{aligned}
\end{equation}

Based on the decomposition of the CCRZ gate, this approach can be directly generalized to a controlled-controlled rotation gate with an arbitrary rotation axis. Specifically, by applying appropriate single-qubit gates to the target qubit, one can modify the rotation axis as desired. Furthermore, by applying X gates to the control qubits, one can alter the control conditions, enabling the realization of the white control circles shown in Fig.~\ref{fig1}(a) of the main text.

\bibliography{ref}

\begin{thebibliography}{78}%
\makeatletter
\providecommand \@ifxundefined [1]{%
 \@ifx{#1\undefined}
}%
\providecommand \@ifnum [1]{%
 \ifnum #1\expandafter \@firstoftwo
 \else \expandafter \@secondoftwo
 \fi
}%
\providecommand \@ifx [1]{%
 \ifx #1\expandafter \@firstoftwo
 \else \expandafter \@secondoftwo
 \fi
}%
\providecommand \natexlab [1]{#1}%
\providecommand \enquote  [1]{``#1''}%
\providecommand \bibnamefont  [1]{#1}%
\providecommand \bibfnamefont [1]{#1}%
\providecommand \citenamefont [1]{#1}%
\providecommand \href@noop [0]{\@secondoftwo}%
\providecommand \href [0]{\begingroup \@sanitize@url \@href}%
\providecommand \@href[1]{\@@startlink{#1}\@@href}%
\providecommand \@@href[1]{\endgroup#1\@@endlink}%
\providecommand \@sanitize@url [0]{\catcode `\\12\catcode `\$12\catcode `\&12\catcode `\#12\catcode `\^12\catcode `\_12\catcode `\%12\relax}%
\providecommand \@@startlink[1]{}%
\providecommand \@@endlink[0]{}%
\providecommand \url  [0]{\begingroup\@sanitize@url \@url }%
\providecommand \@url [1]{\endgroup\@href {#1}{\urlprefix }}%
\providecommand \urlprefix  [0]{URL }%
\providecommand \Eprint [0]{\href }%
\providecommand \doibase [0]{https://doi.org/}%
\providecommand \selectlanguage [0]{\@gobble}%
\providecommand \bibinfo  [0]{\@secondoftwo}%
\providecommand \bibfield  [0]{\@secondoftwo}%
\providecommand \translation [1]{[#1]}%
\providecommand \BibitemOpen [0]{}%
\providecommand \bibitemStop [0]{}%
\providecommand \bibitemNoStop [0]{.\EOS\space}%
\providecommand \EOS [0]{\spacefactor3000\relax}%
\providecommand \BibitemShut  [1]{\csname bibitem#1\endcsname}%
\let\auto@bib@innerbib\@empty
\bibitem [{\citenamefont {Horodecki}\ \emph {et~al.}(2009)\citenamefont {Horodecki}, \citenamefont {Horodecki}, \citenamefont {Horodecki},\ and\ \citenamefont {Horodecki}}]{RevModPhys.81.865}%
  \BibitemOpen
  \bibfield  {author} {\bibinfo {author} {\bibfnamefont {R.}~\bibnamefont {Horodecki}}, \bibinfo {author} {\bibfnamefont {P.}~\bibnamefont {Horodecki}}, \bibinfo {author} {\bibfnamefont {M.}~\bibnamefont {Horodecki}},\ and\ \bibinfo {author} {\bibfnamefont {K.}~\bibnamefont {Horodecki}},\ }\bibfield  {title} {\bibinfo {title} {Quantum entanglement},\ }\href {https://doi.org/10.1103/RevModPhys.81.865} {\bibfield  {journal} {\bibinfo  {journal} {Rev. Mod. Phys.}\ }\textbf {\bibinfo {volume} {81}},\ \bibinfo {pages} {865} (\bibinfo {year} {2009})}\BibitemShut {NoStop}%
\bibitem [{\citenamefont {Nielsen}\ and\ \citenamefont {Chuang}(2010)}]{nielsen2010quantum}%
  \BibitemOpen
  \bibfield  {author} {\bibinfo {author} {\bibfnamefont {M.~A.}\ \bibnamefont {Nielsen}}\ and\ \bibinfo {author} {\bibfnamefont {I.~L.}\ \bibnamefont {Chuang}},\ }\href@noop {} {\emph {\bibinfo {title} {{Q}uantum {C}omputation and {Q}uantum {I}nformation: 10th {A}nniversary {E}dition}}}\ (\bibinfo  {publisher} {Cambridge University Press, Cambridge, England},\ \bibinfo {year} {2010})\BibitemShut {NoStop}%
\bibitem [{\citenamefont {Gisin}\ and\ \citenamefont {Thew}(2007)}]{gisin2007quantum}%
  \BibitemOpen
  \bibfield  {author} {\bibinfo {author} {\bibfnamefont {N.}~\bibnamefont {Gisin}}\ and\ \bibinfo {author} {\bibfnamefont {R.}~\bibnamefont {Thew}},\ }\bibfield  {title} {\bibinfo {title} {Quantum communication},\ }\href {https://www.nature.com/articles/nphoton.2007.22} {\bibfield  {journal} {\bibinfo  {journal} {Nat. Photon.}\ }\textbf {\bibinfo {volume} {1}},\ \bibinfo {pages} {165} (\bibinfo {year} {2007})}\BibitemShut {NoStop}%
\bibitem [{\citenamefont {Giovannetti}\ \emph {et~al.}(2006)\citenamefont {Giovannetti}, \citenamefont {Lloyd},\ and\ \citenamefont {Maccone}}]{PhysRevLett.96.010401}%
  \BibitemOpen
  \bibfield  {author} {\bibinfo {author} {\bibfnamefont {V.}~\bibnamefont {Giovannetti}}, \bibinfo {author} {\bibfnamefont {S.}~\bibnamefont {Lloyd}},\ and\ \bibinfo {author} {\bibfnamefont {L.}~\bibnamefont {Maccone}},\ }\bibfield  {title} {\bibinfo {title} {Quantum {M}etrology},\ }\href {https://doi.org/10.1103/PhysRevLett.96.010401} {\bibfield  {journal} {\bibinfo  {journal} {Phys. Rev. Lett.}\ }\textbf {\bibinfo {volume} {96}},\ \bibinfo {pages} {010401} (\bibinfo {year} {2006})}\BibitemShut {NoStop}%
\bibitem [{\citenamefont {Degen}\ \emph {et~al.}(2017)\citenamefont {Degen}, \citenamefont {Reinhard},\ and\ \citenamefont {Cappellaro}}]{RevModPhys.89.035002}%
  \BibitemOpen
  \bibfield  {author} {\bibinfo {author} {\bibfnamefont {C.~L.}\ \bibnamefont {Degen}}, \bibinfo {author} {\bibfnamefont {F.}~\bibnamefont {Reinhard}},\ and\ \bibinfo {author} {\bibfnamefont {P.}~\bibnamefont {Cappellaro}},\ }\bibfield  {title} {\bibinfo {title} {Quantum sensing},\ }\href {https://doi.org/10.1103/RevModPhys.89.035002} {\bibfield  {journal} {\bibinfo  {journal} {Rev. Mod. Phys.}\ }\textbf {\bibinfo {volume} {89}},\ \bibinfo {pages} {035002} (\bibinfo {year} {2017})}\BibitemShut {NoStop}%
\bibitem [{\citenamefont {Gisin}\ \emph {et~al.}(2002)\citenamefont {Gisin}, \citenamefont {Ribordy}, \citenamefont {Tittel},\ and\ \citenamefont {Zbinden}}]{RevModPhys.74.145}%
  \BibitemOpen
  \bibfield  {author} {\bibinfo {author} {\bibfnamefont {N.}~\bibnamefont {Gisin}}, \bibinfo {author} {\bibfnamefont {G.}~\bibnamefont {Ribordy}}, \bibinfo {author} {\bibfnamefont {W.}~\bibnamefont {Tittel}},\ and\ \bibinfo {author} {\bibfnamefont {H.}~\bibnamefont {Zbinden}},\ }\bibfield  {title} {\bibinfo {title} {Quantum cryptography},\ }\href {https://doi.org/10.1103/RevModPhys.74.145} {\bibfield  {journal} {\bibinfo  {journal} {Rev. Mod. Phys.}\ }\textbf {\bibinfo {volume} {74}},\ \bibinfo {pages} {145} (\bibinfo {year} {2002})}\BibitemShut {NoStop}%
\bibitem [{\citenamefont {Pirandola}\ \emph {et~al.}(2020)\citenamefont {Pirandola}, \citenamefont {Andersen}, \citenamefont {Banchi}, \citenamefont {Berta}, \citenamefont {Bunandar}, \citenamefont {Colbeck}, \citenamefont {Englund}, \citenamefont {Gehring}, \citenamefont {Lupo}, \citenamefont {Ottaviani} \emph {et~al.}}]{pirandola2020advances}%
  \BibitemOpen
  \bibfield  {author} {\bibinfo {author} {\bibfnamefont {S.}~\bibnamefont {Pirandola}}, \bibinfo {author} {\bibfnamefont {U.~L.}\ \bibnamefont {Andersen}}, \bibinfo {author} {\bibfnamefont {L.}~\bibnamefont {Banchi}}, \bibinfo {author} {\bibfnamefont {M.}~\bibnamefont {Berta}}, \bibinfo {author} {\bibfnamefont {D.}~\bibnamefont {Bunandar}}, \bibinfo {author} {\bibfnamefont {R.}~\bibnamefont {Colbeck}}, \bibinfo {author} {\bibfnamefont {D.}~\bibnamefont {Englund}}, \bibinfo {author} {\bibfnamefont {T.}~\bibnamefont {Gehring}}, \bibinfo {author} {\bibfnamefont {C.}~\bibnamefont {Lupo}}, \bibinfo {author} {\bibfnamefont {C.}~\bibnamefont {Ottaviani}}, \emph {et~al.},\ }\bibfield  {title} {\bibinfo {title} {Advances in quantum cryptography},\ }\href {https://opg.optica.org/aop/abstract.cfm?uri=aop-12-4-1012} {\bibfield  {journal} {\bibinfo  {journal} {Adv. Opt. Photon.}\ }\textbf {\bibinfo {volume} {12}},\ \bibinfo {pages} {1012} (\bibinfo {year} {2020})}\BibitemShut {NoStop}%
\bibitem [{\citenamefont {Maruyama}\ \emph {et~al.}(2009)\citenamefont {Maruyama}, \citenamefont {Nori},\ and\ \citenamefont {Vedral}}]{RevModPhys.81.1}%
  \BibitemOpen
  \bibfield  {author} {\bibinfo {author} {\bibfnamefont {K.}~\bibnamefont {Maruyama}}, \bibinfo {author} {\bibfnamefont {F.}~\bibnamefont {Nori}},\ and\ \bibinfo {author} {\bibfnamefont {V.}~\bibnamefont {Vedral}},\ }\bibfield  {title} {\bibinfo {title} {Colloquium: The physics of {M}axwell's demon and information},\ }\href {https://doi.org/10.1103/RevModPhys.81.1} {\bibfield  {journal} {\bibinfo  {journal} {Rev. Mod. Phys.}\ }\textbf {\bibinfo {volume} {81}},\ \bibinfo {pages} {1} (\bibinfo {year} {2009})}\BibitemShut {NoStop}%
\bibitem [{\citenamefont {Parrondo}\ \emph {et~al.}(2015)\citenamefont {Parrondo}, \citenamefont {Horowitz},\ and\ \citenamefont {Sagawa}}]{parrondo2015thermodynamics}%
  \BibitemOpen
  \bibfield  {author} {\bibinfo {author} {\bibfnamefont {J.~M.}\ \bibnamefont {Parrondo}}, \bibinfo {author} {\bibfnamefont {J.~M.}\ \bibnamefont {Horowitz}},\ and\ \bibinfo {author} {\bibfnamefont {T.}~\bibnamefont {Sagawa}},\ }\bibfield  {title} {\bibinfo {title} {Thermodynamics of information},\ }\href {https://www.nature.com/articles/nphys3230} {\bibfield  {journal} {\bibinfo  {journal} {Nat. Phys.}\ }\textbf {\bibinfo {volume} {11}},\ \bibinfo {pages} {131} (\bibinfo {year} {2015})}\BibitemShut {NoStop}%
\bibitem [{\citenamefont {Kurizki}\ and\ \citenamefont {Kofman}(2022)}]{kurizki2022thermodynamics}%
  \BibitemOpen
  \bibfield  {author} {\bibinfo {author} {\bibfnamefont {G.}~\bibnamefont {Kurizki}}\ and\ \bibinfo {author} {\bibfnamefont {A.~G.}\ \bibnamefont {Kofman}},\ }\href {https://www.cambridge.org/core/books/thermodynamics-and-control-of-open-quantum-systems/413DEC3CB5CCC4665EEE86777EC19CD5} {\emph {\bibinfo {title} {Thermodynamics and control of open quantum systems}}}\ (\bibinfo  {publisher} {Cambridge University Press},\ \bibinfo {year} {2022})\BibitemShut {NoStop}%
\bibitem [{\citenamefont {Amico}\ \emph {et~al.}(2008)\citenamefont {Amico}, \citenamefont {Fazio}, \citenamefont {Osterloh},\ and\ \citenamefont {Vedral}}]{RevModPhys.80.517}%
  \BibitemOpen
  \bibfield  {author} {\bibinfo {author} {\bibfnamefont {L.}~\bibnamefont {Amico}}, \bibinfo {author} {\bibfnamefont {R.}~\bibnamefont {Fazio}}, \bibinfo {author} {\bibfnamefont {A.}~\bibnamefont {Osterloh}},\ and\ \bibinfo {author} {\bibfnamefont {V.}~\bibnamefont {Vedral}},\ }\bibfield  {title} {\bibinfo {title} {Entanglement in many-body systems},\ }\href {https://doi.org/10.1103/RevModPhys.80.517} {\bibfield  {journal} {\bibinfo  {journal} {Rev. Mod. Phys.}\ }\textbf {\bibinfo {volume} {80}},\ \bibinfo {pages} {517} (\bibinfo {year} {2008})}\BibitemShut {NoStop}%
\bibitem [{\citenamefont {Abanin}\ \emph {et~al.}(2019)\citenamefont {Abanin}, \citenamefont {Altman}, \citenamefont {Bloch},\ and\ \citenamefont {Serbyn}}]{RevModPhys.91.021001}%
  \BibitemOpen
  \bibfield  {author} {\bibinfo {author} {\bibfnamefont {D.~A.}\ \bibnamefont {Abanin}}, \bibinfo {author} {\bibfnamefont {E.}~\bibnamefont {Altman}}, \bibinfo {author} {\bibfnamefont {I.}~\bibnamefont {Bloch}},\ and\ \bibinfo {author} {\bibfnamefont {M.}~\bibnamefont {Serbyn}},\ }\bibfield  {title} {\bibinfo {title} {Colloquium: {M}any-body localization, thermalization, and entanglement},\ }\href {https://doi.org/10.1103/RevModPhys.91.021001} {\bibfield  {journal} {\bibinfo  {journal} {Rev. Mod. Phys.}\ }\textbf {\bibinfo {volume} {91}},\ \bibinfo {pages} {021001} (\bibinfo {year} {2019})}\BibitemShut {NoStop}%
\bibitem [{\citenamefont {Isar}\ \emph {et~al.}(1994)\citenamefont {Isar}, \citenamefont {Sandulescu}, \citenamefont {Scutaru}, \citenamefont {Stefanescu},\ and\ \citenamefont {Scheid}}]{isar1994open}%
  \BibitemOpen
  \bibfield  {author} {\bibinfo {author} {\bibfnamefont {A.}~\bibnamefont {Isar}}, \bibinfo {author} {\bibfnamefont {A.}~\bibnamefont {Sandulescu}}, \bibinfo {author} {\bibfnamefont {H.}~\bibnamefont {Scutaru}}, \bibinfo {author} {\bibfnamefont {E.}~\bibnamefont {Stefanescu}},\ and\ \bibinfo {author} {\bibfnamefont {W.}~\bibnamefont {Scheid}},\ }\bibfield  {title} {\bibinfo {title} {Open quantum systems},\ }\href {https://www.worldscientific.com/doi/abs/10.1142/S0218301394000164} {\bibfield  {journal} {\bibinfo  {journal} {Int. J. Mod. Phys. E}\ }\textbf {\bibinfo {volume} {3}},\ \bibinfo {pages} {635} (\bibinfo {year} {1994})}\BibitemShut {NoStop}%
\bibitem [{\citenamefont {Breuer}\ and\ \citenamefont {Petruccione}(2007)}]{breuer2002theory}%
  \BibitemOpen
  \bibfield  {author} {\bibinfo {author} {\bibfnamefont {H.-P.}\ \bibnamefont {Breuer}}\ and\ \bibinfo {author} {\bibfnamefont {F.}~\bibnamefont {Petruccione}},\ }\href@noop {} {\emph {\bibinfo {title} {The theory of open quantum systems}}}\ (\bibinfo  {publisher} {Oxford University Press, Oxford},\ \bibinfo {year} {2007})\BibitemShut {NoStop}%
\bibitem [{\citenamefont {Rivas}\ and\ \citenamefont {Huelga}(2012)}]{rivas2012open}%
  \BibitemOpen
  \bibfield  {author} {\bibinfo {author} {\bibfnamefont {A.}~\bibnamefont {Rivas}}\ and\ \bibinfo {author} {\bibfnamefont {S.~F.}\ \bibnamefont {Huelga}},\ }\href@noop {} {\emph {\bibinfo {title} {Open quantum systems}}}\ (\bibinfo  {publisher} {Springer, Berlin},\ \bibinfo {year} {2012})\BibitemShut {NoStop}%
\bibitem [{\citenamefont {Lidar}(2019)}]{lidar2019lecture}%
  \BibitemOpen
  \bibfield  {author} {\bibinfo {author} {\bibfnamefont {D.~A.}\ \bibnamefont {Lidar}},\ }\bibfield  {title} {\bibinfo {title} {Lecture notes on the theory of open quantum systems},\ }\href {https://arxiv.org/abs/1902.00967} {\bibfield  {journal} {\bibinfo  {journal} {arXiv: 1902.00967}\ } (\bibinfo {year} {2019})}\BibitemShut {NoStop}%
\bibitem [{\citenamefont {Minganti}\ and\ \citenamefont {Biella}(2024)}]{minganti2024open}%
  \BibitemOpen
  \bibfield  {author} {\bibinfo {author} {\bibfnamefont {F.}~\bibnamefont {Minganti}}\ and\ \bibinfo {author} {\bibfnamefont {A.}~\bibnamefont {Biella}},\ }\bibfield  {title} {\bibinfo {title} {Open quantum systems--a brief introduction},\ }\href {https://arxiv.org/abs/2407.16855} {\bibfield  {journal} {\bibinfo  {journal} {arXiv: 2407.16855}\ } (\bibinfo {year} {2024})}\BibitemShut {NoStop}%
\bibitem [{\citenamefont {Shi}\ \emph {et~al.}(2009)\citenamefont {Shi}, \citenamefont {Chen}, \citenamefont {Nan}, \citenamefont {Xu},\ and\ \citenamefont {Yan}}]{shi2009electron}%
  \BibitemOpen
  \bibfield  {author} {\bibinfo {author} {\bibfnamefont {Q.}~\bibnamefont {Shi}}, \bibinfo {author} {\bibfnamefont {L.}~\bibnamefont {Chen}}, \bibinfo {author} {\bibfnamefont {G.}~\bibnamefont {Nan}}, \bibinfo {author} {\bibfnamefont {R.}~\bibnamefont {Xu}},\ and\ \bibinfo {author} {\bibfnamefont {Y.}~\bibnamefont {Yan}},\ }\bibfield  {title} {\bibinfo {title} {Electron transfer dynamics: Zusman equation versus exact theory},\ }\href {https://pubs.aip.org/aip/jcp/article/137/19/194106/71677} {\bibfield  {journal} {\bibinfo  {journal} {J. Chem. Phys.}\ }\textbf {\bibinfo {volume} {130}} (\bibinfo {year} {2009})}\BibitemShut {NoStop}%
\bibitem [{\citenamefont {Zhu}\ \emph {et~al.}(2012)\citenamefont {Zhu}, \citenamefont {Liu}, \citenamefont {Xie},\ and\ \citenamefont {Shi}}]{zhu2012explicit}%
  \BibitemOpen
  \bibfield  {author} {\bibinfo {author} {\bibfnamefont {L.}~\bibnamefont {Zhu}}, \bibinfo {author} {\bibfnamefont {H.}~\bibnamefont {Liu}}, \bibinfo {author} {\bibfnamefont {W.}~\bibnamefont {Xie}},\ and\ \bibinfo {author} {\bibfnamefont {Q.}~\bibnamefont {Shi}},\ }\bibfield  {title} {\bibinfo {title} {Explicit system-bath correlation calculated using the hierarchical equations of motion method},\ }\href {https://pubs.aip.org/aip/jcp/article/130/16/164518/905092} {\bibfield  {journal} {\bibinfo  {journal} {J. Chem. Phys.}\ }\textbf {\bibinfo {volume} {137}} (\bibinfo {year} {2012})}\BibitemShut {NoStop}%
\bibitem [{\citenamefont {Pernice}\ and\ \citenamefont {Strunz}(2011)}]{PhysRevA.84.062121}%
  \BibitemOpen
  \bibfield  {author} {\bibinfo {author} {\bibfnamefont {A.}~\bibnamefont {Pernice}}\ and\ \bibinfo {author} {\bibfnamefont {W.~T.}\ \bibnamefont {Strunz}},\ }\bibfield  {title} {\bibinfo {title} {Decoherence and the nature of system-environment correlations},\ }\href {https://doi.org/10.1103/PhysRevA.84.062121} {\bibfield  {journal} {\bibinfo  {journal} {Phys. Rev. A}\ }\textbf {\bibinfo {volume} {84}},\ \bibinfo {pages} {062121} (\bibinfo {year} {2011})}\BibitemShut {NoStop}%
\bibitem [{\citenamefont {Iles-Smith}\ \emph {et~al.}(2014)\citenamefont {Iles-Smith}, \citenamefont {Lambert},\ and\ \citenamefont {Nazir}}]{iles2014environmental}%
  \BibitemOpen
  \bibfield  {author} {\bibinfo {author} {\bibfnamefont {J.}~\bibnamefont {Iles-Smith}}, \bibinfo {author} {\bibfnamefont {N.}~\bibnamefont {Lambert}},\ and\ \bibinfo {author} {\bibfnamefont {A.}~\bibnamefont {Nazir}},\ }\bibfield  {title} {\bibinfo {title} {Environmental dynamics, correlations, and the emergence of noncanonical equilibrium states in open quantum systems},\ }\href {https://journals.aps.org/pra/abstract/10.1103/PhysRevA.90.032114} {\bibfield  {journal} {\bibinfo  {journal} {Phys. Rev. A}\ }\textbf {\bibinfo {volume} {90}},\ \bibinfo {pages} {032114} (\bibinfo {year} {2014})}\BibitemShut {NoStop}%
\bibitem [{\citenamefont {Salamon}\ \emph {et~al.}(2023)\citenamefont {Salamon}, \citenamefont {P\l{}odzie\ifmmode~\acute{n}\else \'{n}\fi{}}, \citenamefont {Lewenstein},\ and\ \citenamefont {Roszak}}]{PhysRevB.107.085428}%
  \BibitemOpen
  \bibfield  {author} {\bibinfo {author} {\bibfnamefont {T.}~\bibnamefont {Salamon}}, \bibinfo {author} {\bibfnamefont {M.}~\bibnamefont {P\l{}odzie\ifmmode~\acute{n}\else \'{n}\fi{}}}, \bibinfo {author} {\bibfnamefont {M.}~\bibnamefont {Lewenstein}},\ and\ \bibinfo {author} {\bibfnamefont {K.}~\bibnamefont {Roszak}},\ }\bibfield  {title} {\bibinfo {title} {Qubit-environment entanglement outside of pure decoherence: {H}yperfine interaction},\ }\href {https://doi.org/10.1103/PhysRevB.107.085428} {\bibfield  {journal} {\bibinfo  {journal} {Phys. Rev. B}\ }\textbf {\bibinfo {volume} {107}},\ \bibinfo {pages} {085428} (\bibinfo {year} {2023})}\BibitemShut {NoStop}%
\bibitem [{\citenamefont {Menczel}\ \emph {et~al.}(2024)\citenamefont {Menczel}, \citenamefont {Funo}, \citenamefont {Cirio}, \citenamefont {Lambert},\ and\ \citenamefont {Nori}}]{PhysRevResearch.6.033237}%
  \BibitemOpen
  \bibfield  {author} {\bibinfo {author} {\bibfnamefont {P.}~\bibnamefont {Menczel}}, \bibinfo {author} {\bibfnamefont {K.}~\bibnamefont {Funo}}, \bibinfo {author} {\bibfnamefont {M.}~\bibnamefont {Cirio}}, \bibinfo {author} {\bibfnamefont {N.}~\bibnamefont {Lambert}},\ and\ \bibinfo {author} {\bibfnamefont {F.}~\bibnamefont {Nori}},\ }\bibfield  {title} {\bibinfo {title} {Non-{H}ermitian pseudomodes for strongly coupled open quantum systems: {U}nravelings, correlations, and thermodynamics},\ }\href {https://doi.org/10.1103/PhysRevResearch.6.033237} {\bibfield  {journal} {\bibinfo  {journal} {Phys. Rev. Res.}\ }\textbf {\bibinfo {volume} {6}},\ \bibinfo {pages} {033237} (\bibinfo {year} {2024})}\BibitemShut {NoStop}%
\bibitem [{\citenamefont {Ptaszy{\'n}ski}\ and\ \citenamefont {Esposito}(2024)}]{PhysRevB.109.115408}%
  \BibitemOpen
  \bibfield  {author} {\bibinfo {author} {\bibfnamefont {K.}~\bibnamefont {Ptaszy{\'n}ski}}\ and\ \bibinfo {author} {\bibfnamefont {M.}~\bibnamefont {Esposito}},\ }\bibfield  {title} {\bibinfo {title} {System-bath entanglement of noninteracting fermionic impurities: Equilibrium, transient, and steady-state regimes},\ }\href {https://doi.org/10.1103/PhysRevB.109.115408} {\bibfield  {journal} {\bibinfo  {journal} {Phys. Rev. B}\ }\textbf {\bibinfo {volume} {109}},\ \bibinfo {pages} {115408} (\bibinfo {year} {2024})}\BibitemShut {NoStop}%
\bibitem [{\citenamefont {Cirio}\ \emph {et~al.}(2025)\citenamefont {Cirio}, \citenamefont {Liang},\ and\ \citenamefont {Lambert}}]{PhysRevA.112.012211}%
  \BibitemOpen
  \bibfield  {author} {\bibinfo {author} {\bibfnamefont {M.}~\bibnamefont {Cirio}}, \bibinfo {author} {\bibfnamefont {P.}~\bibnamefont {Liang}},\ and\ \bibinfo {author} {\bibfnamefont {N.}~\bibnamefont {Lambert}},\ }\bibfield  {title} {\bibinfo {title} {Input-output hierarchical equations of motion},\ }\href {https://doi.org/10.1103/4q4p-z14x} {\bibfield  {journal} {\bibinfo  {journal} {Phys. Rev. A}\ }\textbf {\bibinfo {volume} {112}},\ \bibinfo {pages} {012211} (\bibinfo {year} {2025})}\BibitemShut {NoStop}%
\bibitem [{\citenamefont {Gessner}\ and\ \citenamefont {Breuer}(2011)}]{PhysRevLett.107.180402}%
  \BibitemOpen
  \bibfield  {author} {\bibinfo {author} {\bibfnamefont {M.}~\bibnamefont {Gessner}}\ and\ \bibinfo {author} {\bibfnamefont {H.-P.}\ \bibnamefont {Breuer}},\ }\bibfield  {title} {\bibinfo {title} {{D}etecting {N}onclassical {S}ystem-{E}nvironment {C}orrelations by {L}ocal {O}perations},\ }\href {https://doi.org/10.1103/PhysRevLett.107.180402} {\bibfield  {journal} {\bibinfo  {journal} {Phys. Rev. Lett.}\ }\textbf {\bibinfo {volume} {107}},\ \bibinfo {pages} {180402} (\bibinfo {year} {2011})}\BibitemShut {NoStop}%
\bibitem [{\citenamefont {Smirne}\ \emph {et~al.}(2011)\citenamefont {Smirne}, \citenamefont {Brivio}, \citenamefont {Cialdi}, \citenamefont {Vacchini},\ and\ \citenamefont {Paris}}]{smirne2011experimental}%
  \BibitemOpen
  \bibfield  {author} {\bibinfo {author} {\bibfnamefont {A.}~\bibnamefont {Smirne}}, \bibinfo {author} {\bibfnamefont {D.}~\bibnamefont {Brivio}}, \bibinfo {author} {\bibfnamefont {S.}~\bibnamefont {Cialdi}}, \bibinfo {author} {\bibfnamefont {B.}~\bibnamefont {Vacchini}},\ and\ \bibinfo {author} {\bibfnamefont {M.~G.}\ \bibnamefont {Paris}},\ }\bibfield  {title} {\bibinfo {title} {Experimental investigation of initial system-environment correlations via trace-distance evolution},\ }\href {https://journals.aps.org/pra/abstract/10.1103/PhysRevA.84.032112} {\bibfield  {journal} {\bibinfo  {journal} {Phys. Rev. A}\ }\textbf {\bibinfo {volume} {84}},\ \bibinfo {pages} {032112} (\bibinfo {year} {2011})}\BibitemShut {NoStop}%
\bibitem [{\citenamefont {Mazzola}\ \emph {et~al.}(2012)\citenamefont {Mazzola}, \citenamefont {Rodriguez-Rosario}, \citenamefont {Modi},\ and\ \citenamefont {Paternostro}}]{mazzola2012dynamical}%
  \BibitemOpen
  \bibfield  {author} {\bibinfo {author} {\bibfnamefont {L.}~\bibnamefont {Mazzola}}, \bibinfo {author} {\bibfnamefont {C.~A.}\ \bibnamefont {Rodriguez-Rosario}}, \bibinfo {author} {\bibfnamefont {K.}~\bibnamefont {Modi}},\ and\ \bibinfo {author} {\bibfnamefont {M.}~\bibnamefont {Paternostro}},\ }\bibfield  {title} {\bibinfo {title} {Dynamical role of system-environment correlations in non-{M}arkovian dynamics},\ }\href {https://journals.aps.org/pra/abstract/10.1103/PhysRevA.86.010102} {\bibfield  {journal} {\bibinfo  {journal} {Phys. Rev. A}\ }\textbf {\bibinfo {volume} {86}},\ \bibinfo {pages} {010102} (\bibinfo {year} {2012})}\BibitemShut {NoStop}%
\bibitem [{\citenamefont {Smirne}\ \emph {et~al.}(2013)\citenamefont {Smirne}, \citenamefont {Mazzola}, \citenamefont {Paternostro},\ and\ \citenamefont {Vacchini}}]{smirne2013interaction}%
  \BibitemOpen
  \bibfield  {author} {\bibinfo {author} {\bibfnamefont {A.}~\bibnamefont {Smirne}}, \bibinfo {author} {\bibfnamefont {L.}~\bibnamefont {Mazzola}}, \bibinfo {author} {\bibfnamefont {M.}~\bibnamefont {Paternostro}},\ and\ \bibinfo {author} {\bibfnamefont {B.}~\bibnamefont {Vacchini}},\ }\bibfield  {title} {\bibinfo {title} {Interaction-induced correlations and non-{M}arkovianity of quantum dynamics},\ }\href {https://journals.aps.org/pra/abstract/10.1103/PhysRevA.87.052129} {\bibfield  {journal} {\bibinfo  {journal} {Phys. Rev. A}\ }\textbf {\bibinfo {volume} {87}},\ \bibinfo {pages} {052129} (\bibinfo {year} {2013})}\BibitemShut {NoStop}%
\bibitem [{\citenamefont {Gessner}\ \emph {et~al.}(2014)\citenamefont {Gessner}, \citenamefont {Ramm}, \citenamefont {Pruttivarasin}, \citenamefont {Buchleitner}, \citenamefont {Breuer},\ and\ \citenamefont {H{\"a}ffner}}]{gessner2014local}%
  \BibitemOpen
  \bibfield  {author} {\bibinfo {author} {\bibfnamefont {M.}~\bibnamefont {Gessner}}, \bibinfo {author} {\bibfnamefont {M.}~\bibnamefont {Ramm}}, \bibinfo {author} {\bibfnamefont {T.}~\bibnamefont {Pruttivarasin}}, \bibinfo {author} {\bibfnamefont {A.}~\bibnamefont {Buchleitner}}, \bibinfo {author} {\bibfnamefont {H.-P.}\ \bibnamefont {Breuer}},\ and\ \bibinfo {author} {\bibfnamefont {H.}~\bibnamefont {H{\"a}ffner}},\ }\bibfield  {title} {\bibinfo {title} {Local detection of quantum correlations with a single trapped ion},\ }\href {https://www.nature.com/articles/nphys2829} {\bibfield  {journal} {\bibinfo  {journal} {Nat. Phys.}\ }\textbf {\bibinfo {volume} {10}},\ \bibinfo {pages} {105} (\bibinfo {year} {2014})}\BibitemShut {NoStop}%
\bibitem [{\citenamefont {Tang}\ \emph {et~al.}(2015)\citenamefont {Tang}, \citenamefont {Wang}, \citenamefont {Chen}, \citenamefont {Zou}, \citenamefont {Li}, \citenamefont {Guo}, \citenamefont {Yu}, \citenamefont {Li}, \citenamefont {Zha}, \citenamefont {Ni} \emph {et~al.}}]{tang2015experimental}%
  \BibitemOpen
  \bibfield  {author} {\bibinfo {author} {\bibfnamefont {J.-S.}\ \bibnamefont {Tang}}, \bibinfo {author} {\bibfnamefont {Y.-T.}\ \bibnamefont {Wang}}, \bibinfo {author} {\bibfnamefont {G.}~\bibnamefont {Chen}}, \bibinfo {author} {\bibfnamefont {Y.}~\bibnamefont {Zou}}, \bibinfo {author} {\bibfnamefont {C.-F.}\ \bibnamefont {Li}}, \bibinfo {author} {\bibfnamefont {G.-C.}\ \bibnamefont {Guo}}, \bibinfo {author} {\bibfnamefont {Y.}~\bibnamefont {Yu}}, \bibinfo {author} {\bibfnamefont {M.-F.}\ \bibnamefont {Li}}, \bibinfo {author} {\bibfnamefont {G.-W.}\ \bibnamefont {Zha}}, \bibinfo {author} {\bibfnamefont {H.-Q.}\ \bibnamefont {Ni}}, \emph {et~al.},\ }\bibfield  {title} {\bibinfo {title} {Experimental detection of polarization-frequency quantum correlations in a photonic quantum channel by local operations},\ }\href {https://opg.optica.org/optica/fulltext.cfm?uri=optica-2-12-1014} {\bibfield  {journal} {\bibinfo  {journal} {Optica}\ }\textbf {\bibinfo {volume} {2}},\ \bibinfo {pages} {1014} (\bibinfo {year}
  {2015})}\BibitemShut {NoStop}%
\bibitem [{\citenamefont {Rivas}\ \emph {et~al.}(2014)\citenamefont {Rivas}, \citenamefont {Huelga},\ and\ \citenamefont {Plenio}}]{rivas2014quantum}%
  \BibitemOpen
  \bibfield  {author} {\bibinfo {author} {\bibfnamefont {{\'A}.}~\bibnamefont {Rivas}}, \bibinfo {author} {\bibfnamefont {S.~F.}\ \bibnamefont {Huelga}},\ and\ \bibinfo {author} {\bibfnamefont {M.~B.}\ \bibnamefont {Plenio}},\ }\bibfield  {title} {\bibinfo {title} {Quantum non-{M}arkovianity: characterization, quantification and detection},\ }\href {https://iopscience.iop.org/article/10.1088/0034-4885/77/9/094001/meta} {\bibfield  {journal} {\bibinfo  {journal} {Rep. Prog. Phys.}\ }\textbf {\bibinfo {volume} {77}},\ \bibinfo {pages} {094001} (\bibinfo {year} {2014})}\BibitemShut {NoStop}%
\bibitem [{\citenamefont {Breuer}\ \emph {et~al.}(2016)\citenamefont {Breuer}, \citenamefont {Laine}, \citenamefont {Piilo},\ and\ \citenamefont {Vacchini}}]{breuer2016colloquium}%
  \BibitemOpen
  \bibfield  {author} {\bibinfo {author} {\bibfnamefont {H.-P.}\ \bibnamefont {Breuer}}, \bibinfo {author} {\bibfnamefont {E.-M.}\ \bibnamefont {Laine}}, \bibinfo {author} {\bibfnamefont {J.}~\bibnamefont {Piilo}},\ and\ \bibinfo {author} {\bibfnamefont {B.}~\bibnamefont {Vacchini}},\ }\bibfield  {title} {\bibinfo {title} {Colloquium: {N}on-{M}arkovian dynamics in open quantum systems},\ }\href {https://journals.aps.org/rmp/abstract/10.1103/RevModPhys.88.021002#s4} {\bibfield  {journal} {\bibinfo  {journal} {Rev. Mod. Phys.}\ }\textbf {\bibinfo {volume} {88}},\ \bibinfo {pages} {021002} (\bibinfo {year} {2016})}\BibitemShut {NoStop}%
\bibitem [{\citenamefont {De~Vega}\ and\ \citenamefont {Alonso}(2017)}]{de2017dynamics}%
  \BibitemOpen
  \bibfield  {author} {\bibinfo {author} {\bibfnamefont {I.}~\bibnamefont {De~Vega}}\ and\ \bibinfo {author} {\bibfnamefont {D.}~\bibnamefont {Alonso}},\ }\bibfield  {title} {\bibinfo {title} {Dynamics of non-{M}arkovian open quantum systems},\ }\href {https://journals.aps.org/rmp/abstract/10.1103/RevModPhys.88.021002} {\bibfield  {journal} {\bibinfo  {journal} {Rev. Mod. Phys.}\ }\textbf {\bibinfo {volume} {89}},\ \bibinfo {pages} {015001} (\bibinfo {year} {2017})}\BibitemShut {NoStop}%
\bibitem [{\citenamefont {Zurek}(2003)}]{RevModPhys.75.715}%
  \BibitemOpen
  \bibfield  {author} {\bibinfo {author} {\bibfnamefont {W.~H.}\ \bibnamefont {Zurek}},\ }\bibfield  {title} {\bibinfo {title} {Decoherence, einselection, and the quantum origins of the classical},\ }\href {https://doi.org/10.1103/RevModPhys.75.715} {\bibfield  {journal} {\bibinfo  {journal} {Rev. Mod. Phys.}\ }\textbf {\bibinfo {volume} {75}},\ \bibinfo {pages} {715} (\bibinfo {year} {2003})}\BibitemShut {NoStop}%
\bibitem [{\citenamefont {Schlosshauer}(2019)}]{schlosshauer2019quantum}%
  \BibitemOpen
  \bibfield  {author} {\bibinfo {author} {\bibfnamefont {M.}~\bibnamefont {Schlosshauer}},\ }\bibfield  {title} {\bibinfo {title} {Quantum decoherence},\ }\href {https://www.sciencedirect.com/science/article/pii/S0370157319303084} {\bibfield  {journal} {\bibinfo  {journal} {Phys. Rep.}\ }\textbf {\bibinfo {volume} {831}},\ \bibinfo {pages} {1} (\bibinfo {year} {2019})}\BibitemShut {NoStop}%
\bibitem [{\citenamefont {Krummheuer}\ \emph {et~al.}(2002)\citenamefont {Krummheuer}, \citenamefont {Axt},\ and\ \citenamefont {Kuhn}}]{PhysRevB.65.195313}%
  \BibitemOpen
  \bibfield  {author} {\bibinfo {author} {\bibfnamefont {B.}~\bibnamefont {Krummheuer}}, \bibinfo {author} {\bibfnamefont {V.~M.}\ \bibnamefont {Axt}},\ and\ \bibinfo {author} {\bibfnamefont {T.}~\bibnamefont {Kuhn}},\ }\bibfield  {title} {\bibinfo {title} {Theory of pure dephasing and the resulting absorption line shape in semiconductor quantum dots},\ }\href {https://doi.org/10.1103/PhysRevB.65.195313} {\bibfield  {journal} {\bibinfo  {journal} {Phys. Rev. B}\ }\textbf {\bibinfo {volume} {65}},\ \bibinfo {pages} {195313} (\bibinfo {year} {2002})}\BibitemShut {NoStop}%
\bibitem [{\citenamefont {Walther}\ \emph {et~al.}(2006)\citenamefont {Walther}, \citenamefont {Varcoe}, \citenamefont {Englert},\ and\ \citenamefont {Becker}}]{walther2006cavity}%
  \BibitemOpen
  \bibfield  {author} {\bibinfo {author} {\bibfnamefont {H.}~\bibnamefont {Walther}}, \bibinfo {author} {\bibfnamefont {B.~T.}\ \bibnamefont {Varcoe}}, \bibinfo {author} {\bibfnamefont {B.-G.}\ \bibnamefont {Englert}},\ and\ \bibinfo {author} {\bibfnamefont {T.}~\bibnamefont {Becker}},\ }\bibfield  {title} {\bibinfo {title} {Cavity quantum electrodynamics},\ }\href {https://iopscience.iop.org/article/10.1088/0034-4885/69/5/R02/meta} {\bibfield  {journal} {\bibinfo  {journal} {Rep. Prog. Phys.}\ }\textbf {\bibinfo {volume} {69}},\ \bibinfo {pages} {1325} (\bibinfo {year} {2006})}\BibitemShut {NoStop}%
\bibitem [{\citenamefont {Blais}\ \emph {et~al.}(2021)\citenamefont {Blais}, \citenamefont {Grimsmo}, \citenamefont {Girvin},\ and\ \citenamefont {Wallraff}}]{RevModPhys.93.025005}%
  \BibitemOpen
  \bibfield  {author} {\bibinfo {author} {\bibfnamefont {A.}~\bibnamefont {Blais}}, \bibinfo {author} {\bibfnamefont {A.~L.}\ \bibnamefont {Grimsmo}}, \bibinfo {author} {\bibfnamefont {S.~M.}\ \bibnamefont {Girvin}},\ and\ \bibinfo {author} {\bibfnamefont {A.}~\bibnamefont {Wallraff}},\ }\bibfield  {title} {\bibinfo {title} {Circuit quantum electrodynamics},\ }\href {https://doi.org/10.1103/RevModPhys.93.025005} {\bibfield  {journal} {\bibinfo  {journal} {Rev. Mod. Phys.}\ }\textbf {\bibinfo {volume} {93}},\ \bibinfo {pages} {025005} (\bibinfo {year} {2021})}\BibitemShut {NoStop}%
\bibitem [{\citenamefont {Gu}\ \emph {et~al.}(2017)\citenamefont {Gu}, \citenamefont {Kockum}, \citenamefont {Miranowicz}, \citenamefont {Liu},\ and\ \citenamefont {Nori}}]{gu2017microwave}%
  \BibitemOpen
  \bibfield  {author} {\bibinfo {author} {\bibfnamefont {X.}~\bibnamefont {Gu}}, \bibinfo {author} {\bibfnamefont {A.~F.}\ \bibnamefont {Kockum}}, \bibinfo {author} {\bibfnamefont {A.}~\bibnamefont {Miranowicz}}, \bibinfo {author} {\bibfnamefont {Y.-X.}\ \bibnamefont {Liu}},\ and\ \bibinfo {author} {\bibfnamefont {F.}~\bibnamefont {Nori}},\ }\bibfield  {title} {\bibinfo {title} {Microwave photonics with superconducting quantum circuits},\ }\href {https://www.sciencedirect.com/science/article/pii/S0370157317303290} {\bibfield  {journal} {\bibinfo  {journal} {Phys. Rep.}\ }\textbf {\bibinfo {volume} {718}},\ \bibinfo {pages} {1} (\bibinfo {year} {2017})}\BibitemShut {NoStop}%
\bibitem [{\citenamefont {Kockum}\ and\ \citenamefont {Nori}(2019)}]{kockum2019quantum}%
  \BibitemOpen
  \bibfield  {author} {\bibinfo {author} {\bibfnamefont {A.~F.}\ \bibnamefont {Kockum}}\ and\ \bibinfo {author} {\bibfnamefont {F.}~\bibnamefont {Nori}},\ }\bibfield  {title} {\bibinfo {title} {Quantum {B}its with {J}osephson {J}unctions},\ }in\ \href {https://link.springer.com/chapter/10.1007/978-3-030-20726-7_17} {\emph {\bibinfo {booktitle} {Fundamentals and Frontiers of the Josephson Effect}}},\ \bibinfo {series} {Springer Series in Materials Science}, Vol.\ \bibinfo {volume} {286},\ \bibinfo {editor} {edited by\ \bibinfo {editor} {\bibfnamefont {F.}~\bibnamefont {Tafuri}}}\ (\bibinfo  {publisher} {Springer},\ \bibinfo {year} {2019})\ Chap.~\bibinfo {chapter} {17}, pp.\ \bibinfo {pages} {703--741}\BibitemShut {NoStop}%
\bibitem [{\citenamefont {Kjaergaard}\ \emph {et~al.}(2020)\citenamefont {Kjaergaard}, \citenamefont {Schwartz}, \citenamefont {Braum{\"u}ller}, \citenamefont {Krantz}, \citenamefont {Wang}, \citenamefont {Gustavsson},\ and\ \citenamefont {Oliver}}]{kjaergaard2020superconducting}%
  \BibitemOpen
  \bibfield  {author} {\bibinfo {author} {\bibfnamefont {M.}~\bibnamefont {Kjaergaard}}, \bibinfo {author} {\bibfnamefont {M.~E.}\ \bibnamefont {Schwartz}}, \bibinfo {author} {\bibfnamefont {J.}~\bibnamefont {Braum{\"u}ller}}, \bibinfo {author} {\bibfnamefont {P.}~\bibnamefont {Krantz}}, \bibinfo {author} {\bibfnamefont {J.~I.-J.}\ \bibnamefont {Wang}}, \bibinfo {author} {\bibfnamefont {S.}~\bibnamefont {Gustavsson}},\ and\ \bibinfo {author} {\bibfnamefont {W.~D.}\ \bibnamefont {Oliver}},\ }\bibfield  {title} {\bibinfo {title} {Superconducting qubits: {C}urrent state of play},\ }\href {https://www.annualreviews.org/content/journals/10.1146/annurev-conmatphys-031119-050605} {\bibfield  {journal} {\bibinfo  {journal} {Annu. Rev. Condens. Matter Phys.}\ }\textbf {\bibinfo {volume} {11}},\ \bibinfo {pages} {369} (\bibinfo {year} {2020})}\BibitemShut {NoStop}%
\bibitem [{\citenamefont {Bose}\ \emph {et~al.}(2017)\citenamefont {Bose}, \citenamefont {Mazumdar}, \citenamefont {Morley}, \citenamefont {Ulbricht}, \citenamefont {Toro\ifmmode~\check{s}\else \v{s}\fi{}}, \citenamefont {Paternostro}, \citenamefont {Geraci}, \citenamefont {Barker}, \citenamefont {Kim},\ and\ \citenamefont {Milburn}}]{PhysRevLett.119.240401}%
  \BibitemOpen
  \bibfield  {author} {\bibinfo {author} {\bibfnamefont {S.}~\bibnamefont {Bose}}, \bibinfo {author} {\bibfnamefont {A.}~\bibnamefont {Mazumdar}}, \bibinfo {author} {\bibfnamefont {G.~W.}\ \bibnamefont {Morley}}, \bibinfo {author} {\bibfnamefont {H.}~\bibnamefont {Ulbricht}}, \bibinfo {author} {\bibfnamefont {M.}~\bibnamefont {Toro\ifmmode~\check{s}\else \v{s}\fi{}}}, \bibinfo {author} {\bibfnamefont {M.}~\bibnamefont {Paternostro}}, \bibinfo {author} {\bibfnamefont {A.~A.}\ \bibnamefont {Geraci}}, \bibinfo {author} {\bibfnamefont {P.~F.}\ \bibnamefont {Barker}}, \bibinfo {author} {\bibfnamefont {M.~S.}\ \bibnamefont {Kim}},\ and\ \bibinfo {author} {\bibfnamefont {G.}~\bibnamefont {Milburn}},\ }\bibfield  {title} {\bibinfo {title} {Spin entanglement witness for quantum gravity},\ }\href {https://doi.org/10.1103/PhysRevLett.119.240401} {\bibfield  {journal} {\bibinfo  {journal} {Phys. Rev. Lett.}\ }\textbf {\bibinfo {volume} {119}},\ \bibinfo {pages} {240401} (\bibinfo {year} {2017})}\BibitemShut {NoStop}%
\bibitem [{\citenamefont {Carney}\ \emph {et~al.}(2021)\citenamefont {Carney}, \citenamefont {M\"uller},\ and\ \citenamefont {Taylor}}]{PRXQuantum.2.030330}%
  \BibitemOpen
  \bibfield  {author} {\bibinfo {author} {\bibfnamefont {D.}~\bibnamefont {Carney}}, \bibinfo {author} {\bibfnamefont {H.}~\bibnamefont {M\"uller}},\ and\ \bibinfo {author} {\bibfnamefont {J.~M.}\ \bibnamefont {Taylor}},\ }\bibfield  {title} {\bibinfo {title} {Using an atom interferometer to infer gravitational entanglement generation},\ }\href {https://doi.org/10.1103/PRXQuantum.2.030330} {\bibfield  {journal} {\bibinfo  {journal} {PRX Quantum}\ }\textbf {\bibinfo {volume} {2}},\ \bibinfo {pages} {030330} (\bibinfo {year} {2021})},\ \bibinfo {note} {{E}rratum: \href{https://doi.org/10.1103/PRXQuantum.3.010902}{PRX Quantum \textbf{3}, 010902 (2022)}}\BibitemShut {NoStop}%
\bibitem [{\citenamefont {Pedernales}\ \emph {et~al.}(2022)\citenamefont {Pedernales}, \citenamefont {Streltsov},\ and\ \citenamefont {Plenio}}]{PhysRevLett.128.110401}%
  \BibitemOpen
  \bibfield  {author} {\bibinfo {author} {\bibfnamefont {J.~S.}\ \bibnamefont {Pedernales}}, \bibinfo {author} {\bibfnamefont {K.}~\bibnamefont {Streltsov}},\ and\ \bibinfo {author} {\bibfnamefont {M.~B.}\ \bibnamefont {Plenio}},\ }\bibfield  {title} {\bibinfo {title} {Enhancing gravitational interaction between quantum systems by a massive mediator},\ }\href {https://doi.org/10.1103/PhysRevLett.128.110401} {\bibfield  {journal} {\bibinfo  {journal} {Phys. Rev. Lett.}\ }\textbf {\bibinfo {volume} {128}},\ \bibinfo {pages} {110401} (\bibinfo {year} {2022})}\BibitemShut {NoStop}%
\bibitem [{\citenamefont {Ganardi}\ \emph {et~al.}(2024)\citenamefont {Ganardi}, \citenamefont {Panwar}, \citenamefont {Pandit}, \citenamefont {Woloncewicz},\ and\ \citenamefont {Paterek}}]{PRXQuantum.5.010318}%
  \BibitemOpen
  \bibfield  {author} {\bibinfo {author} {\bibfnamefont {R.}~\bibnamefont {Ganardi}}, \bibinfo {author} {\bibfnamefont {E.}~\bibnamefont {Panwar}}, \bibinfo {author} {\bibfnamefont {M.}~\bibnamefont {Pandit}}, \bibinfo {author} {\bibfnamefont {B.}~\bibnamefont {Woloncewicz}},\ and\ \bibinfo {author} {\bibfnamefont {T.}~\bibnamefont {Paterek}},\ }\bibfield  {title} {\bibinfo {title} {Quantitative nonclassicality of mediated interactions},\ }\href {https://doi.org/10.1103/PRXQuantum.5.010318} {\bibfield  {journal} {\bibinfo  {journal} {PRX Quantum}\ }\textbf {\bibinfo {volume} {5}},\ \bibinfo {pages} {010318} (\bibinfo {year} {2024})}\BibitemShut {NoStop}%
\bibitem [{\citenamefont {Marletto}\ and\ \citenamefont {Vedral}(2025)}]{RevModPhys.97.015006}%
  \BibitemOpen
  \bibfield  {author} {\bibinfo {author} {\bibfnamefont {C.}~\bibnamefont {Marletto}}\ and\ \bibinfo {author} {\bibfnamefont {V.}~\bibnamefont {Vedral}},\ }\bibfield  {title} {\bibinfo {title} {Quantum-information methods for quantum gravity laboratory-based tests},\ }\href {https://doi.org/10.1103/RevModPhys.97.015006} {\bibfield  {journal} {\bibinfo  {journal} {Rev. Mod. Phys.}\ }\textbf {\bibinfo {volume} {97}},\ \bibinfo {pages} {015006} (\bibinfo {year} {2025})}\BibitemShut {NoStop}%
\bibitem [{\citenamefont {Bose}\ \emph {et~al.}(2025)\citenamefont {Bose}, \citenamefont {Fuentes}, \citenamefont {Geraci}, \citenamefont {Khan}, \citenamefont {Qvarfort}, \citenamefont {Rademacher}, \citenamefont {Rashid}, \citenamefont {Toro{\v{s}}}, \citenamefont {Ulbricht},\ and\ \citenamefont {Wanjura}}]{RevModPhys.97.015003}%
  \BibitemOpen
  \bibfield  {author} {\bibinfo {author} {\bibfnamefont {S.}~\bibnamefont {Bose}}, \bibinfo {author} {\bibfnamefont {I.}~\bibnamefont {Fuentes}}, \bibinfo {author} {\bibfnamefont {A.~A.}\ \bibnamefont {Geraci}}, \bibinfo {author} {\bibfnamefont {S.~M.}\ \bibnamefont {Khan}}, \bibinfo {author} {\bibfnamefont {S.}~\bibnamefont {Qvarfort}}, \bibinfo {author} {\bibfnamefont {M.}~\bibnamefont {Rademacher}}, \bibinfo {author} {\bibfnamefont {M.}~\bibnamefont {Rashid}}, \bibinfo {author} {\bibfnamefont {M.}~\bibnamefont {Toro{\v{s}}}}, \bibinfo {author} {\bibfnamefont {H.}~\bibnamefont {Ulbricht}},\ and\ \bibinfo {author} {\bibfnamefont {C.~C.}\ \bibnamefont {Wanjura}},\ }\bibfield  {title} {\bibinfo {title} {Massive quantum systems as interfaces of quantum mechanics and gravity},\ }\href {https://doi.org/10.1103/RevModPhys.97.015003} {\bibfield  {journal} {\bibinfo  {journal} {Rev. Mod. Phys.}\ }\textbf {\bibinfo {volume} {97}},\ \bibinfo {pages} {015003} (\bibinfo {year} {2025})}\BibitemShut {NoStop}%
\bibitem [{\citenamefont {Chen}\ \emph {et~al.}(2018)\citenamefont {Chen}, \citenamefont {Gneiting}, \citenamefont {Lo}, \citenamefont {Chen},\ and\ \citenamefont {Nori}}]{PhysRevLett.120.030403}%
  \BibitemOpen
  \bibfield  {author} {\bibinfo {author} {\bibfnamefont {H.-B.}\ \bibnamefont {Chen}}, \bibinfo {author} {\bibfnamefont {C.}~\bibnamefont {Gneiting}}, \bibinfo {author} {\bibfnamefont {P.-Y.}\ \bibnamefont {Lo}}, \bibinfo {author} {\bibfnamefont {Y.-N.}\ \bibnamefont {Chen}},\ and\ \bibinfo {author} {\bibfnamefont {F.}~\bibnamefont {Nori}},\ }\bibfield  {title} {\bibinfo {title} {{S}imulating {O}pen {Q}uantum {S}ystems with {H}amiltonian {E}nsembles and the {N}onclassicality of the {D}ynamics},\ }\href {https://doi.org/10.1103/PhysRevLett.120.030403} {\bibfield  {journal} {\bibinfo  {journal} {Phys. Rev. Lett.}\ }\textbf {\bibinfo {volume} {120}},\ \bibinfo {pages} {030403} (\bibinfo {year} {2018})}\BibitemShut {NoStop}%
\bibitem [{\citenamefont {Chen}\ \emph {et~al.}(2019)\citenamefont {Chen}, \citenamefont {Lo}, \citenamefont {Gneiting}, \citenamefont {Bae}, \citenamefont {Chen},\ and\ \citenamefont {Nori}}]{chen2019quantifying}%
  \BibitemOpen
  \bibfield  {author} {\bibinfo {author} {\bibfnamefont {H.-B.}\ \bibnamefont {Chen}}, \bibinfo {author} {\bibfnamefont {P.-Y.}\ \bibnamefont {Lo}}, \bibinfo {author} {\bibfnamefont {C.}~\bibnamefont {Gneiting}}, \bibinfo {author} {\bibfnamefont {J.}~\bibnamefont {Bae}}, \bibinfo {author} {\bibfnamefont {Y.-N.}\ \bibnamefont {Chen}},\ and\ \bibinfo {author} {\bibfnamefont {F.}~\bibnamefont {Nori}},\ }\bibfield  {title} {\bibinfo {title} {Quantifying the nonclassicality of pure dephasing},\ }\href {https://www.nature.com/articles/s41467-019-11502-4} {\bibfield  {journal} {\bibinfo  {journal} {Nat. Commun.}\ }\textbf {\bibinfo {volume} {10}},\ \bibinfo {pages} {3794} (\bibinfo {year} {2019})}\BibitemShut {NoStop}%
\bibitem [{\citenamefont {Chen}\ \emph {et~al.}(2024)\citenamefont {Chen}, \citenamefont {Liu}, \citenamefont {Lai}, \citenamefont {Tseng}, \citenamefont {Lo}, \citenamefont {Chen},\ and\ \citenamefont {Yu}}]{chen2024unveiling}%
  \BibitemOpen
  \bibfield  {author} {\bibinfo {author} {\bibfnamefont {H.-B.}\ \bibnamefont {Chen}}, \bibinfo {author} {\bibfnamefont {C.-H.}\ \bibnamefont {Liu}}, \bibinfo {author} {\bibfnamefont {K.-L.}\ \bibnamefont {Lai}}, \bibinfo {author} {\bibfnamefont {B.-Y.}\ \bibnamefont {Tseng}}, \bibinfo {author} {\bibfnamefont {P.-Y.}\ \bibnamefont {Lo}}, \bibinfo {author} {\bibfnamefont {Y.-N.}\ \bibnamefont {Chen}},\ and\ \bibinfo {author} {\bibfnamefont {C.-H.}\ \bibnamefont {Yu}},\ }\bibfield  {title} {\bibinfo {title} {Unveiling the nonclassicality within quasi-distribution representations through deep learning},\ }\href {https://iopscience.iop.org/article/10.1088/2058-9565/ad8ef0/meta} {\bibfield  {journal} {\bibinfo  {journal} {Quantum Sci. Technol.}\ }\textbf {\bibinfo {volume} {10}},\ \bibinfo {pages} {015029} (\bibinfo {year} {2024})}\BibitemShut {NoStop}%
\bibitem [{\citenamefont {Ollivier}\ and\ \citenamefont {Zurek}(2001)}]{PhysRevLett.88.017901}%
  \BibitemOpen
  \bibfield  {author} {\bibinfo {author} {\bibfnamefont {H.}~\bibnamefont {Ollivier}}\ and\ \bibinfo {author} {\bibfnamefont {W.~H.}\ \bibnamefont {Zurek}},\ }\bibfield  {title} {\bibinfo {title} {Quantum discord: A measure of the quantumness of correlations},\ }\href {https://doi.org/10.1103/PhysRevLett.88.017901} {\bibfield  {journal} {\bibinfo  {journal} {Phys. Rev. Lett.}\ }\textbf {\bibinfo {volume} {88}},\ \bibinfo {pages} {017901} (\bibinfo {year} {2001})}\BibitemShut {NoStop}%
\bibitem [{\citenamefont {Henderson}\ and\ \citenamefont {Vedral}(2001)}]{henderson2001classical}%
  \BibitemOpen
  \bibfield  {author} {\bibinfo {author} {\bibfnamefont {L.}~\bibnamefont {Henderson}}\ and\ \bibinfo {author} {\bibfnamefont {V.}~\bibnamefont {Vedral}},\ }\bibfield  {title} {\bibinfo {title} {Classical, quantum and total correlations},\ }\href {https://iopscience.iop.org/article/10.1088/0305-4470/34/35/315} {\bibfield  {journal} {\bibinfo  {journal} {J. Phys. A: Math. Gen.}\ }\textbf {\bibinfo {volume} {34}},\ \bibinfo {pages} {6899} (\bibinfo {year} {2001})}\BibitemShut {NoStop}%
\bibitem [{\citenamefont {Modi}\ \emph {et~al.}(2012)\citenamefont {Modi}, \citenamefont {Brodutch}, \citenamefont {Cable}, \citenamefont {Paterek},\ and\ \citenamefont {Vedral}}]{RevModPhys.84.1655}%
  \BibitemOpen
  \bibfield  {author} {\bibinfo {author} {\bibfnamefont {K.}~\bibnamefont {Modi}}, \bibinfo {author} {\bibfnamefont {A.}~\bibnamefont {Brodutch}}, \bibinfo {author} {\bibfnamefont {H.}~\bibnamefont {Cable}}, \bibinfo {author} {\bibfnamefont {T.}~\bibnamefont {Paterek}},\ and\ \bibinfo {author} {\bibfnamefont {V.}~\bibnamefont {Vedral}},\ }\bibfield  {title} {\bibinfo {title} {The classical-quantum boundary for correlations: Discord and related measures},\ }\href {https://doi.org/10.1103/RevModPhys.84.1655} {\bibfield  {journal} {\bibinfo  {journal} {Rev. Mod. Phys.}\ }\textbf {\bibinfo {volume} {84}},\ \bibinfo {pages} {1655} (\bibinfo {year} {2012})}\BibitemShut {NoStop}%
\bibitem [{\citenamefont {Bera}\ \emph {et~al.}(2017)\citenamefont {Bera}, \citenamefont {Das}, \citenamefont {Sadhukhan}, \citenamefont {Roy}, \citenamefont {De},\ and\ \citenamefont {Sen}}]{bera2017quantum}%
  \BibitemOpen
  \bibfield  {author} {\bibinfo {author} {\bibfnamefont {A.}~\bibnamefont {Bera}}, \bibinfo {author} {\bibfnamefont {T.}~\bibnamefont {Das}}, \bibinfo {author} {\bibfnamefont {D.}~\bibnamefont {Sadhukhan}}, \bibinfo {author} {\bibfnamefont {S.~S.}\ \bibnamefont {Roy}}, \bibinfo {author} {\bibfnamefont {A.~S.}\ \bibnamefont {De}},\ and\ \bibinfo {author} {\bibfnamefont {U.}~\bibnamefont {Sen}},\ }\bibfield  {title} {\bibinfo {title} {Quantum discord and its allies: a review of recent progress},\ }\href {https://iopscience.iop.org/article/10.1088/1361-6633/aa872f/meta} {\bibfield  {journal} {\bibinfo  {journal} {Rep. Prog. Phys.}\ }\textbf {\bibinfo {volume} {81}},\ \bibinfo {pages} {024001} (\bibinfo {year} {2017})}\BibitemShut {NoStop}%
\bibitem [{\citenamefont {Audenaert}\ and\ \citenamefont {Scheel}(2008)}]{audenaert2008random}%
  \BibitemOpen
  \bibfield  {author} {\bibinfo {author} {\bibfnamefont {K.~M.}\ \bibnamefont {Audenaert}}\ and\ \bibinfo {author} {\bibfnamefont {S.}~\bibnamefont {Scheel}},\ }\bibfield  {title} {\bibinfo {title} {On random unitary channels},\ }\href {https://iopscience.iop.org/article/10.1088/1367-2630/10/2/023011/meta} {\bibfield  {journal} {\bibinfo  {journal} {New J. Phys.}\ }\textbf {\bibinfo {volume} {10}},\ \bibinfo {pages} {023011} (\bibinfo {year} {2008})}\BibitemShut {NoStop}%
\bibitem [{\citenamefont {Helm}\ and\ \citenamefont {Strunz}(2009)}]{PhysRevA.80.042108}%
  \BibitemOpen
  \bibfield  {author} {\bibinfo {author} {\bibfnamefont {J.}~\bibnamefont {Helm}}\ and\ \bibinfo {author} {\bibfnamefont {W.~T.}\ \bibnamefont {Strunz}},\ }\bibfield  {title} {\bibinfo {title} {Quantum decoherence of two qubits},\ }\href {https://doi.org/10.1103/PhysRevA.80.042108} {\bibfield  {journal} {\bibinfo  {journal} {Phys. Rev. A}\ }\textbf {\bibinfo {volume} {80}},\ \bibinfo {pages} {042108} (\bibinfo {year} {2009})}\BibitemShut {NoStop}%
\bibitem [{\citenamefont {Mendl}\ and\ \citenamefont {Wolf}(2009)}]{mendl2009unital}%
  \BibitemOpen
  \bibfield  {author} {\bibinfo {author} {\bibfnamefont {C.~B.}\ \bibnamefont {Mendl}}\ and\ \bibinfo {author} {\bibfnamefont {M.~M.}\ \bibnamefont {Wolf}},\ }\bibfield  {title} {\bibinfo {title} {Unital quantum channels--convex structure and revivals of {B}irkhoff’s theorem},\ }\href {https://link.springer.com/article/10.1007/s00220-009-0824-2} {\bibfield  {journal} {\bibinfo  {journal} {Commun. Math. Phys.}\ }\textbf {\bibinfo {volume} {289}},\ \bibinfo {pages} {1057} (\bibinfo {year} {2009})}\BibitemShut {NoStop}%
\bibitem [{\citenamefont {Kayser}\ \emph {et~al.}(2015)\citenamefont {Kayser}, \citenamefont {Luoma},\ and\ \citenamefont {Strunz}}]{PhysRevA.92.052117}%
  \BibitemOpen
  \bibfield  {author} {\bibinfo {author} {\bibfnamefont {J.}~\bibnamefont {Kayser}}, \bibinfo {author} {\bibfnamefont {K.}~\bibnamefont {Luoma}},\ and\ \bibinfo {author} {\bibfnamefont {W.~T.}\ \bibnamefont {Strunz}},\ }\bibfield  {title} {\bibinfo {title} {Geometric characterization of true quantum decoherence},\ }\href {https://doi.org/10.1103/PhysRevA.92.052117} {\bibfield  {journal} {\bibinfo  {journal} {Phys. Rev. A}\ }\textbf {\bibinfo {volume} {92}},\ \bibinfo {pages} {052117} (\bibinfo {year} {2015})}\BibitemShut {NoStop}%
\bibitem [{\citenamefont {Bruns}\ \emph {et~al.}(2016)\citenamefont {Bruns}, \citenamefont {Sperling},\ and\ \citenamefont {Scheel}}]{PhysRevA.93.032132}%
  \BibitemOpen
  \bibfield  {author} {\bibinfo {author} {\bibfnamefont {D.}~\bibnamefont {Bruns}}, \bibinfo {author} {\bibfnamefont {J.}~\bibnamefont {Sperling}},\ and\ \bibinfo {author} {\bibfnamefont {S.}~\bibnamefont {Scheel}},\ }\bibfield  {title} {\bibinfo {title} {Witnessing random unitary and projective quantum channels: {C}omplementarity between separable and maximally entangled states},\ }\href {https://doi.org/10.1103/PhysRevA.93.032132} {\bibfield  {journal} {\bibinfo  {journal} {Phys. Rev. A}\ }\textbf {\bibinfo {volume} {93}},\ \bibinfo {pages} {032132} (\bibinfo {year} {2016})}\BibitemShut {NoStop}%
\bibitem [{\citenamefont {Lee}\ and\ \citenamefont {Watrous}(2020)}]{lee2020detecting}%
  \BibitemOpen
  \bibfield  {author} {\bibinfo {author} {\bibfnamefont {C.~D.-Y.}\ \bibnamefont {Lee}}\ and\ \bibinfo {author} {\bibfnamefont {J.}~\bibnamefont {Watrous}},\ }\bibfield  {title} {\bibinfo {title} {Detecting mixed-unitary quantum channels is {NP}-hard},\ }\href {https://quantum-journal.org/papers/q-2020-04-16-253/} {\bibfield  {journal} {\bibinfo  {journal} {Quantum}\ }\textbf {\bibinfo {volume} {4}},\ \bibinfo {pages} {253} (\bibinfo {year} {2020})}\BibitemShut {NoStop}%
\bibitem [{\citenamefont {Roszak}\ and\ \citenamefont {Cywi\ifmmode~\acute{n}\else \'{n}\fi{}ski}(2015)}]{PhysRevA.92.032310}%
  \BibitemOpen
  \bibfield  {author} {\bibinfo {author} {\bibfnamefont {K.}~\bibnamefont {Roszak}}\ and\ \bibinfo {author} {\bibfnamefont {L.}~\bibnamefont {Cywi\ifmmode~\acute{n}\else \'{n}\fi{}ski}},\ }\bibfield  {title} {\bibinfo {title} {Characterization and measurement of qubit-environment-entanglement generation during pure dephasing},\ }\href {https://doi.org/10.1103/PhysRevA.92.032310} {\bibfield  {journal} {\bibinfo  {journal} {Phys. Rev. A}\ }\textbf {\bibinfo {volume} {92}},\ \bibinfo {pages} {032310} (\bibinfo {year} {2015})}\BibitemShut {NoStop}%
\bibitem [{\citenamefont {Roszak}(2018)}]{PhysRevA.98.052344}%
  \BibitemOpen
  \bibfield  {author} {\bibinfo {author} {\bibfnamefont {K.}~\bibnamefont {Roszak}},\ }\bibfield  {title} {\bibinfo {title} {Criteria for system-environment entanglement generation for systems of any size in pure-dephasing evolutions},\ }\href {https://doi.org/10.1103/PhysRevA.98.052344} {\bibfield  {journal} {\bibinfo  {journal} {Phys. Rev. A}\ }\textbf {\bibinfo {volume} {98}},\ \bibinfo {pages} {052344} (\bibinfo {year} {2018})}\BibitemShut {NoStop}%
\bibitem [{\citenamefont {Roszak}(2020)}]{PhysRevResearch.2.043062}%
  \BibitemOpen
  \bibfield  {author} {\bibinfo {author} {\bibfnamefont {K.}~\bibnamefont {Roszak}},\ }\bibfield  {title} {\bibinfo {title} {Measure of qubit-environment entanglement for pure dephasing evolutions},\ }\href {https://doi.org/10.1103/PhysRevResearch.2.043062} {\bibfield  {journal} {\bibinfo  {journal} {Phys. Rev. Res.}\ }\textbf {\bibinfo {volume} {2}},\ \bibinfo {pages} {043062} (\bibinfo {year} {2020})}\BibitemShut {NoStop}%
\bibitem [{\citenamefont {Zhan}\ \emph {et~al.}(2021)\citenamefont {Zhan}, \citenamefont {Qu}, \citenamefont {Wang}, \citenamefont {Xiao},\ and\ \citenamefont {Xue}}]{zhan2021experimental}%
  \BibitemOpen
  \bibfield  {author} {\bibinfo {author} {\bibfnamefont {X.}~\bibnamefont {Zhan}}, \bibinfo {author} {\bibfnamefont {D.}~\bibnamefont {Qu}}, \bibinfo {author} {\bibfnamefont {K.}~\bibnamefont {Wang}}, \bibinfo {author} {\bibfnamefont {L.}~\bibnamefont {Xiao}},\ and\ \bibinfo {author} {\bibfnamefont {P.}~\bibnamefont {Xue}},\ }\bibfield  {title} {\bibinfo {title} {Experimental detection of qubit-environment entanglement without accessing the environment},\ }\href {https://journals.aps.org/pra/abstract/10.1103/PhysRevA.104.L020201} {\bibfield  {journal} {\bibinfo  {journal} {Phys. Rev. A}\ }\textbf {\bibinfo {volume} {104}},\ \bibinfo {pages} {L020201} (\bibinfo {year} {2021})}\BibitemShut {NoStop}%
\bibitem [{\citenamefont {Strza\l{}ka}\ and\ \citenamefont {Roszak}(2021)}]{PhysRevA.104.042411}%
  \BibitemOpen
  \bibfield  {author} {\bibinfo {author} {\bibfnamefont {M.}~\bibnamefont {Strza\l{}ka}}\ and\ \bibinfo {author} {\bibfnamefont {K.}~\bibnamefont {Roszak}},\ }\bibfield  {title} {\bibinfo {title} {Detection of entanglement during pure dephasing evolutions for systems and environments of any size},\ }\href {https://doi.org/10.1103/PhysRevA.104.042411} {\bibfield  {journal} {\bibinfo  {journal} {Phys. Rev. A}\ }\textbf {\bibinfo {volume} {104}},\ \bibinfo {pages} {042411} (\bibinfo {year} {2021})}\BibitemShut {NoStop}%
\bibitem [{\citenamefont {Takahashi}\ and\ \citenamefont {Korbicz}(2025)}]{takahashi2025coherence}%
  \BibitemOpen
  \bibfield  {author} {\bibinfo {author} {\bibfnamefont {M.}~\bibnamefont {Takahashi}}\ and\ \bibinfo {author} {\bibfnamefont {J.~K.}\ \bibnamefont {Korbicz}},\ }\bibfield  {title} {\bibinfo {title} {Coherence-assisted entanglement activation during open evolution},\ }\href {https://doi.org/10.1103/8vc7-64fk} {\bibfield  {journal} {\bibinfo  {journal} {Phys. Rev. A}\ }\textbf {\bibinfo {volume} {112}},\ \bibinfo {pages} {032222} (\bibinfo {year} {2025})}\BibitemShut {NoStop}%
\bibitem [{\citenamefont {{Quantinuum}}()}]{quantinuum}%
  \BibitemOpen
  \bibfield  {author} {\bibinfo {author} {\bibnamefont {{Quantinuum}}},\ }\href@noop {} {\bibinfo {title} {{System Model H1}}},\ \bibinfo {howpublished} {\url{https://www.quantinuum.com/}}\BibitemShut {NoStop}%
\bibitem [{\citenamefont {Huang}\ \emph {et~al.}(2011)\citenamefont {Huang}, \citenamefont {Wang},\ and\ \citenamefont {Zhu}}]{huang2011new}%
  \BibitemOpen
  \bibfield  {author} {\bibinfo {author} {\bibfnamefont {J.-H.}\ \bibnamefont {Huang}}, \bibinfo {author} {\bibfnamefont {L.}~\bibnamefont {Wang}},\ and\ \bibinfo {author} {\bibfnamefont {S.-Y.}\ \bibnamefont {Zhu}},\ }\bibfield  {title} {\bibinfo {title} {A new criterion for zero quantum discord},\ }\href {https://iopscience.iop.org/article/10.1088/1367-2630/13/6/063045/meta} {\bibfield  {journal} {\bibinfo  {journal} {New J. Phys.}\ }\textbf {\bibinfo {volume} {13}},\ \bibinfo {pages} {063045} (\bibinfo {year} {2011})}\BibitemShut {NoStop}%
\bibitem [{\citenamefont {Peres}(1996)}]{PhysRevLett.77.1413}%
  \BibitemOpen
  \bibfield  {author} {\bibinfo {author} {\bibfnamefont {A.}~\bibnamefont {Peres}},\ }\bibfield  {title} {\bibinfo {title} {Separability criterion for density matrices},\ }\href {https://doi.org/10.1103/PhysRevLett.77.1413} {\bibfield  {journal} {\bibinfo  {journal} {Phys. Rev. Lett.}\ }\textbf {\bibinfo {volume} {77}},\ \bibinfo {pages} {1413} (\bibinfo {year} {1996})}\BibitemShut {NoStop}%
\bibitem [{\citenamefont {Horodecki}\ \emph {et~al.}(1996)\citenamefont {Horodecki}, \citenamefont {Horodecki},\ and\ \citenamefont {Horodecki}}]{HORODECKI19961}%
  \BibitemOpen
  \bibfield  {author} {\bibinfo {author} {\bibfnamefont {M.}~\bibnamefont {Horodecki}}, \bibinfo {author} {\bibfnamefont {P.}~\bibnamefont {Horodecki}},\ and\ \bibinfo {author} {\bibfnamefont {R.}~\bibnamefont {Horodecki}},\ }\bibfield  {title} {\bibinfo {title} {Separability of mixed states: necessary and sufficient conditions},\ }\href {https://doi.org/https://doi.org/10.1016/S0375-9601(96)00706-2} {\bibfield  {journal} {\bibinfo  {journal} {Phys. Lett. A}\ }\textbf {\bibinfo {volume} {223}},\ \bibinfo {pages} {1} (\bibinfo {year} {1996})}\BibitemShut {NoStop}%
\bibitem [{\citenamefont {Vidal}\ and\ \citenamefont {Werner}(2002)}]{PhysRevA.65.032314}%
  \BibitemOpen
  \bibfield  {author} {\bibinfo {author} {\bibfnamefont {G.}~\bibnamefont {Vidal}}\ and\ \bibinfo {author} {\bibfnamefont {R.~F.}\ \bibnamefont {Werner}},\ }\bibfield  {title} {\bibinfo {title} {Computable measure of entanglement},\ }\href {https://doi.org/10.1103/PhysRevA.65.032314} {\bibfield  {journal} {\bibinfo  {journal} {Phys. Rev. A}\ }\textbf {\bibinfo {volume} {65}},\ \bibinfo {pages} {032314} (\bibinfo {year} {2002})}\BibitemShut {NoStop}%
\bibitem [{\citenamefont {Mori}(2023)}]{mori2023floquet}%
  \BibitemOpen
  \bibfield  {author} {\bibinfo {author} {\bibfnamefont {T.}~\bibnamefont {Mori}},\ }\bibfield  {title} {\bibinfo {title} {Floquet states in open quantum systems},\ }\href {https://www.annualreviews.org/content/journals/10.1146/annurev-conmatphys-040721-015537} {\bibfield  {journal} {\bibinfo  {journal} {Annu. Rev. Condens. Matter Phys.}\ }\textbf {\bibinfo {volume} {14}},\ \bibinfo {pages} {35} (\bibinfo {year} {2023})}\BibitemShut {NoStop}%
\bibitem [{\citenamefont {Ciccarello}\ \emph {et~al.}(2022)\citenamefont {Ciccarello}, \citenamefont {Lorenzo}, \citenamefont {Giovannetti},\ and\ \citenamefont {Palma}}]{ciccarello2022quantum}%
  \BibitemOpen
  \bibfield  {author} {\bibinfo {author} {\bibfnamefont {F.}~\bibnamefont {Ciccarello}}, \bibinfo {author} {\bibfnamefont {S.}~\bibnamefont {Lorenzo}}, \bibinfo {author} {\bibfnamefont {V.}~\bibnamefont {Giovannetti}},\ and\ \bibinfo {author} {\bibfnamefont {G.~M.}\ \bibnamefont {Palma}},\ }\bibfield  {title} {\bibinfo {title} {{Q}uantum collision models: {O}pen system dynamics from repeated interactions},\ }\href {https://www.sciencedirect.com/science/article/pii/S0370157322000035} {\bibfield  {journal} {\bibinfo  {journal} {Phys. Rep.}\ }\textbf {\bibinfo {volume} {954}},\ \bibinfo {pages} {1} (\bibinfo {year} {2022})}\BibitemShut {NoStop}%
\bibitem [{\citenamefont {DiVincenzo}\ \emph {et~al.}(1998)\citenamefont {DiVincenzo}, \citenamefont {Fuchs}, \citenamefont {Mabuchi}, \citenamefont {Smolin}, \citenamefont {Thapliyal},\ and\ \citenamefont {Uhlmann}}]{divincenzo1998entanglement}%
  \BibitemOpen
  \bibfield  {author} {\bibinfo {author} {\bibfnamefont {D.~P.}\ \bibnamefont {DiVincenzo}}, \bibinfo {author} {\bibfnamefont {C.~A.}\ \bibnamefont {Fuchs}}, \bibinfo {author} {\bibfnamefont {H.}~\bibnamefont {Mabuchi}}, \bibinfo {author} {\bibfnamefont {J.~A.}\ \bibnamefont {Smolin}}, \bibinfo {author} {\bibfnamefont {A.}~\bibnamefont {Thapliyal}},\ and\ \bibinfo {author} {\bibfnamefont {A.}~\bibnamefont {Uhlmann}},\ }\bibfield  {title} {\bibinfo {title} {Entanglement of assistance},\ }in\ \href@noop {} {\emph {\bibinfo {booktitle} {Quantum Computing and Quantum Communications}}}\ (\bibinfo {organization} {Springer},\ \bibinfo {year} {1998})\ pp.\ \bibinfo {pages} {247--257}\BibitemShut {NoStop}%
\bibitem [{\citenamefont {Bassi}\ \emph {et~al.}(2017)\citenamefont {Bassi}, \citenamefont {Gro{\ss}ardt},\ and\ \citenamefont {Ulbricht}}]{bassi2017gravitational}%
  \BibitemOpen
  \bibfield  {author} {\bibinfo {author} {\bibfnamefont {A.}~\bibnamefont {Bassi}}, \bibinfo {author} {\bibfnamefont {A.}~\bibnamefont {Gro{\ss}ardt}},\ and\ \bibinfo {author} {\bibfnamefont {H.}~\bibnamefont {Ulbricht}},\ }\bibfield  {title} {\bibinfo {title} {Gravitational decoherence},\ }\href {https://iopscience.iop.org/article/10.1088/1361-6382/aa864f/meta} {\bibfield  {journal} {\bibinfo  {journal} {Class. Quantum Grav.}\ }\textbf {\bibinfo {volume} {34}},\ \bibinfo {pages} {193002} (\bibinfo {year} {2017})}\BibitemShut {NoStop}%
\bibitem [{\citenamefont {Anastopoulos}\ and\ \citenamefont {Hu}(2022)}]{anastopoulos2022gravitational}%
  \BibitemOpen
  \bibfield  {author} {\bibinfo {author} {\bibfnamefont {C.}~\bibnamefont {Anastopoulos}}\ and\ \bibinfo {author} {\bibfnamefont {B.-L.}\ \bibnamefont {Hu}},\ }\bibfield  {title} {\bibinfo {title} {Gravitational decoherence: A thematic overview},\ }\href {https://pubs.aip.org/avs/aqs/article-abstract/4/1/015602/2835278/Gravitational-decoherence-A-thematic-overview?redirectedFrom=fulltext} {\bibfield  {journal} {\bibinfo  {journal} {AVS Quantum Sci.}\ }\textbf {\bibinfo {volume} {4}} (\bibinfo {year} {2022})}\BibitemShut {NoStop}%
\bibitem [{\citenamefont {Oppenheim}(2023)}]{PhysRevX.13.041040}%
  \BibitemOpen
  \bibfield  {author} {\bibinfo {author} {\bibfnamefont {J.}~\bibnamefont {Oppenheim}},\ }\bibfield  {title} {\bibinfo {title} {A {P}ostquantum {T}heory of {C}lassical {G}ravity?},\ }\href {https://doi.org/10.1103/PhysRevX.13.041040} {\bibfield  {journal} {\bibinfo  {journal} {Phys. Rev. X}\ }\textbf {\bibinfo {volume} {13}},\ \bibinfo {pages} {041040} (\bibinfo {year} {2023})}\BibitemShut {NoStop}%
\end{thebibliography}%

\end{document}